\newtheorem{thm}{Theorem}[section]
\newtheorem{prop}[thm]{Proposition}
\newtheorem{lem} [thm]{Lemma}
\newtheorem{cor} [thm]{Corollary}
\newtheorem{defi}[thm]{Definition}
\newtheorem{hyp} [thm]{Hypothesis}
\let\c@algocf\c@thm
\newcommand{\area}{\Pi}
\newcommand{\conf}{\mathcal{C}}
\newcommand{\none}{{\bf None}}
\newcommand{\dual}{\vee}
\newcommand{\ab}{\mathbf{a}}
\newcommand{\bb}{\mathbf{b}}
\newcommand{\nn}{\mathbf{n}}
\newcommand{\uu}{\mathbf{u}}
\newcommand{\vv}{\mathbf{v}}
\newcommand{\xx}{\mathbf{x}}
\DeclareMathOperator{\Span}{Span}
\DeclareMathOperator{\supp}{Supp}
\title{Maximum Overlap Area of Several Convex Polygons Under Translations}
\author{Hyuk Jun Kweon}
\address{Department of Mathematics, University of Georgia, Athens, GA 30602, USA}
\email{kweon@uga.edu}
\urladdr{https://kweon7182.github.io/}
\author{Honglin Zhu}
\address{Department of Mathematics, Massachusetts Institute of Technology, Cambridge, MA 02139, USA}
\email{honglinz@mit.edu}
\date{\today}
\begin{document}

\begin{abstract}
    Let $k \geq 2$ be a constant. Given any $k$ convex polygons in the plane with a total of $n$ vertices, we present an $O(n\log^{2k-3}n)$ time algorithm that finds a translation of each of the polygons such that the area of intersection of the $k$ polygons is maximized. Given one such placement, we also give an $O(n)$ time algorithm which computes the set of all translations of the polygons which achieve this maximum. 
    
\end{abstract}
\maketitle

\section{Introduction}
Shape matching is a critical area in computational geometry, with overlap area or volume often used to measure the similarity between shapes when translated. In this paper, we present a quasilinear time algorithm to solve the problem of maximizing the overlap area of several convex polygons, as stated in the following theorem.

\begin{restatable}{thm}{mainTheorem} \label{thm:main}
  Let $P_0,P_1,\dots,P_{k-1}$ be convex polygons, with a total of $n$ vertices, where $k$ is constant. In $O(n\log^{2k-3}n)$ time, we can finds translations $\vv_0,\vv_1,\dots,\vv_{k-1}$ maximizing the area of
  \[(P_0+\vv_0) \cap \dots \cap (P_{k-1}+\vv_{k-1}).\]
\end{restatable}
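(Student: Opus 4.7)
The strategy rests on a concavity property of the overlap function. By translation invariance we may fix $\vv_0 = \mathbf{0}$ and work with
\[ f(\vv_1,\dots,\vv_{k-1}) := \mathrm{area}\bigl(P_0\cap(P_1+\vv_1)\cap\cdots\cap(P_{k-1}+\vv_{k-1})\bigr) \]
on $\mathbb{R}^{2(k-1)}$. The first step is to show that $\sqrt{f}$ is concave. This follows by Brunn--Minkowski applied to the convex body
\[ K := \{(\xx,\vv_1,\dots,\vv_{k-1})\in\mathbb{R}^{2k}: \xx\in P_0,\ \xx-\vv_i\in P_i\}, \]
whose two-dimensional slice above $(\vv_1,\dots,\vv_{k-1})$ has area $f(\vv_1,\dots,\vv_{k-1})$. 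Hence $f$ is quasi-concave with convex super-level sets; the optimum is attained on a convex set of translations, and any local maximum is global.

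The plan is then a recursion on the dimension $d := 2(k-1)$, targeting the recurrence $T(d,n) = O(T(d-1,n)\log n)$ with base case $T(1,n) = O(n)$, which unrolls to $T(d,n) = O(n\log^{d-1}n) = O(n\log^{2k-3}n)$. For the one-dimensional base case, fix all but one translation coordinate so that $f$ becomes a function of a scalar $t$. Then $f(t)$ is piecewise polynomial of bounded degree with $O(n)$ breakpoints, occurring precisely when the combinatorial type of the intersection polygon changes (a vertex of some $P_i+\vv_i$ crosses an edge of another). Enumerating these events in $O(n)$ time by a sweep that maintains the intersection polygon incrementally, and using quasi-concavity to locate the cell containing the maximum, yields the base case. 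For the recursive step, apply parametric search in the style of Megiddo: choose one coordinate direction of $(\vv_1,\dots,\vv_{k-1})$, run the $(d-1)$-dimensional routine symbolically on that parameter $t$, resolve each comparison as a sign test of a bounded-degree polynomial in $t$, and use the one-dimensional quasi-concavity oracle as the binary-search predicate. This adds a single $\log n$ factor per level, summing to the claimed bound.

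The main obstacle I expect is executing the parametric search cleanly. The $(d-1)$-dimensional routine makes combinatorial decisions about an intersection polygon, and one must verify that each such decision reduces to a bounded-degree polynomial sign test in the symbolic parameter; this requires controlling how the intersection's combinatorial structure varies with the translation parameters, including degeneracies such as collinear edges and coincident vertices. A secondary technical issue is initializing the recursion in a combinatorially generic position so that the symbolic computations are well-defined, which likely requires a preliminary $O(n\log n)$ sweep or arrangement construction. Neither is expected to affect the leading-order complexity, but both need careful bookkeeping to avoid an extra logarithmic factor.
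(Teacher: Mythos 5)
Your setup is right and matches the paper: the concavity of $\sqrt{f}$ via Brunn--Minkowski applied to the lifted convex body, and the recurrence $T(d,n)=O(T(d-1,n)\log n)$ over the dimension of the configuration space, are exactly the skeleton of the argument. But the inductive step, which is the real content, has a genuine gap. Megiddo-style parametric search does not give an $O(\log n)$ overhead per level the way you claim. If you simulate the $(d-1)$-dimensional routine symbolically in a parameter $t$ and resolve each of its comparisons with a call to a decision oracle, the cost is (number of comparisons resolved) times (oracle cost); the oracle here is itself a $(d-1)$-dimensional maximization, so a sequential simulation gives roughly $T(d-1,n)^2$, not $T(d-1,n)\log n$. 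Getting the $\log n$ overhead from parametric search would require a parallel version of the recursive routine with polylogarithmic depth, which you have not provided and which is not at all clear for a recursively defined prune-and-search. Moreover, binary-searching a single coordinate direction over a continuum has no termination criterion without an explicit candidate set: the maximizer of the restricted function is generally not at an event value, so you must localize the optimum inside a cell of the arrangement of event hypersurfaces where $f$ is a single quadratic, and then solve exactly. The paper does precisely this: the ``type I'' events (vertex of $P_j$ on an edge of $P_i$) fall into $O(k^2)=O(1)$ families, each projecting under $\vv_i-\vv_j$ to $O(n)$ groups of parallel lines in the plane; these are pruned by a two-line ham-sandwich partition using $O(1)$ oracle calls per round over $O(\log n)$ rounds, after which only $O(n)$ ``type II'' events (three edges concurrent) survive, which are eliminated with $O(\log n)$ rounds of $(1/2)$-cuttings. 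That explicit organization of the event hyperplanes is what you are missing, and it is the part of the proof that cannot be waved through.

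Two secondary issues. First, your base case: a sweep enumerating the $O(n)$ breakpoints of $f(t)$ in $O(n)$ time requires producing them already sorted, which you do not justify; the paper instead lifts each $P_i$ to a slanted cylinder in $\mathbb{R}^3$, intersects the cylinders with Chazelle's linear-time polyhedron intersection, and finds the maximum-area horizontal section in $O(n)$ time. Second, your decision oracle (``which side of a hyperplane contains the optimum'') needs to distinguish increase, decrease, and a flat maximum at the query hyperplane even when the restricted maximum is attained on a face or is zero; the paper handles this by evaluating the restricted maximum at symbolic infinitesimal translates $h(-\varepsilon)$, $h(0)$, $h(\varepsilon)$ over a field of Puiseux series, a point your sketch does not address.
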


Once we have found a placement $\vv_0,\vv_1,\dots,\vv_{k-1}$ that maximizes the overlap area, we can compute the set of all such placements in linear time.
\begin{restatable}{thm}{secondMainTheorem} \label{thm:second main}
  With the notation in \Cref{thm:main}, suppose that we have found a placement $(\vv_0,\vv_1,\dots,\vv_{k-1})$ maximizing the overlap area. Then in $O(n)$ time, we can compute the set of all placements that maximize the overlap area. This set is represented in terms of $O(n)$ linear constraints without redundancy.
\end{restatable}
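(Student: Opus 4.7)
The plan is to characterize the maximum-overlap set via a Brunn--Minkowski rigidity argument and then encode the resulting geometric containment as $O(n)$ linear constraints. Let $R^* := \bigcap_i (P_i + \vv_i)$ denote the overlap region at the given max placement; it is a convex polygon with $O(n)$ vertices, computable in $O(n)$ time by iterated convex-polygon intersection, together with the ``owner'' polygon of each of its edges.

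The central claim to establish is that $(\vv_0', \ldots, \vv_{k-1}')$ is a maximizer if and only if there exists $t \in \mathbb{R}^2$ with $R^* + t \subseteq P_i + \vv_i'$ for every $i$. The ``if'' direction is immediate, since then the overlap contains a translate of $R^*$ and hence has area at least (thus equal to) $A^* := \mathrm{Area}(R^*)$. For the ``only if'' direction, let $R' := \bigcap_i (P_i + \vv_i')$; convexity of each $P_i$ gives $(R^* + R')/2 \subseteq \bigcap_i (P_i + (\vv_i + \vv_i')/2)$, and combining this inclusion with the Brunn--Minkowski inequality in the plane and the maximality of $A^*$ forces equality throughout the chain. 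The equality case of Brunn--Minkowski then forces the equal-area convex sets $R^*$ and $R'$ to be translates, so $R' = R^* + t$ for some $t$, and $R^* + t \subseteq P_i + \vv_i'$ follows for every $i$.

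Each containment $R^* + t \subseteq P_i + \vv_i'$ unfolds via support functions into one linear inequality per edge $e$ of $P_i$:
\[\nn_e \cdot t - \nn_e \cdot \vv_i' \leq h_{P_i}(\nn_e) - h_{R^*}(\nn_e),\]
where $\nn_e$ is the outward unit normal of $e$. Summed over all edges of all polygons, this yields exactly $n$ linear inequalities in the variables $(t, \vv_0', \ldots, \vv_{k-1}')$, and the max set is the projection of the resulting polyhedron onto the $\vv$-coordinates. To achieve $O(n)$ runtime, I would compute every $h_{R^*}(\nn_e)$ by a rotating-calipers sweep that synchronizes the vertex list of $R^*$ with the angular order of each $P_i$'s edges; both are cyclically sorted since the polygons are convex, so the sweep costs $O(|P_i| + |R^*|)$ per polygon and $O(n)$ in total. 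Redundancy resolves cleanly: the irredundant constraints are exactly those whose normal $\nn_e$ is a facet-normal of the Minkowski difference $Q_i := P_i \ominus R^*$, and $\sum_i |Q_i| = O(n)$ since each $Q_i$ has at most $|P_i| + |R^*|$ edges.

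The main obstacle is the rigidity claim --- that every maximum overlap region is a translate of $R^*$ --- which depends on the equality case of Brunn--Minkowski applied to the midpoint placement. Once that is in place, the algorithmic ingredients (support-function sweep and Minkowski-difference edge extraction) are standard.
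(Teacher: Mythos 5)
Your proposal is correct and follows essentially the same route as the paper: the Brunn--Minkowski equality case applied to the midpoint placement shows every maximal overlap is a translate of $R^*$, and the maximizing set is then the (injective) image in $\conf$ of the product of Minkowski differences $P_i \ominus R^*$, each described by one support-function inequality per edge of $P_i$ and computed in $O(n)$ total time. The only presentational difference is that you carry an explicit auxiliary variable $t$ and project it out, whereas the paper phrases the same step as the affine isomorphism $\pi|_N\colon N\to M$ followed by an explicit change of variables into $\vv_1-\vv_0,\dots,\vv_{k-1}-\vv_0$.
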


Suppose that we have $k$ polytopes in $\mathbb{R}^d$ with $n$ vertices in total. Clearly, the overlap volume function under translation is a piecewise polynomial function. To find the maximum overlap volume under translation, we can compute the maximum on each piece. For example, Fukuda and Uno presented an $O(n^4)$ time algorithm for maximizing the overlap area of two polygons in $\mathbb{R}^2$ \cite[Theorem 6.2]{fukuda2007polynomial}. They also gave an $O((kn^{dk+1})^d)$ time algorithm for the problem with $k$ polytopes in $\mathbb{R}^d$ \cite[Theorem 6.4]{fukuda2007polynomial}.

If the polytopes are convex, then the overlap volume function is log-concave. With this additional structure, one may apply a prune-and-search technique and make the algorithm much faster. For example, de Berg et al. gave a highly practical $O(n\log n)$ time algorithm to find the maximum overlap of two convex polygons in $\mathbb{R}^2$ \cite[Theorem 3.8]{de1998computing}. Ahn, Brass and Shin gave a randomized algorithm for finding maximum overlap of two convex polyhedrons in expected time $O(n^3\log^4 n)$ \cite[Theorem 1]{ahn2008}. Ahn, Cheng and Reinbacher \cite[Theorem 2]{ahn2013} find an $O(n\log^{3.5}n)$ time algorithm for the same problem after taking a generic infinitesimal perturbation. The last two results cited from \cite{ahn2008} and \cite{ahn2013} have also been generalized to higher-dimensional cases within the same papers.

On the other hand, there are few known results for problems involving several convex shapes. In this regard, 
the authors 
proposed an $O(n\log^3 n)$ time algorithm to find the maximal overlap area of three convex polygons \cite[Theorem 1.2]{firstPaper}. This result is based on an $O(n\log^2 n)$ time algorithm that finds the maximum overlap area of a convex polyhedron and a convex polygon in $\mathbb{R}^3$ \cite[Theorem 1.1]{firstPaper}. The main algorithm of this paper is a strict generalization of both \cite[Theorem 3.8]{de1998computing} and \cite[Theorem 1.2]{firstPaper}.

The model of computation is the real RAM model. In particular, we assume that in the field of real numbers $\mathbb{R}$, binary operations $+$, $-$, $\times$ and $/$ as well as binary relations $<$ and $=$ can be exactly computed in constant time. We remark that the base field $\mathbb{R}$ can be replaced by any ordered field such as $\mathbb{Q}$ and $\mathbb{R}(\!(\varepsilon)\!)$.

\section{Notation and Terminology}

In this paper, we use the notation $\supp f$ to refer to the closed support of a function $f$, i.e., the closure of the set of points where $f$ is nonzero. Given a set $S$ of vectors over a field $R$, its spanning space is denoted as $\Span_R S$. For two sets $A,B\in\mathbb{R}^d$, we define their Minkowski sum and difference as $A+B=\{\ab+\bb\mid \ab\in A, \bb \in B\}$ and $A-B = \{\xx \mid \xx+B\subset A\}$, respectively. 

We consider closed polytopes unless otherwise specified. When referring to a polytope $P$, its (geometric) interior consists of the set of points not on the facets, while its (geometric) boundary comprises the set of points on the facets. On the other hand, the topological interior of $P\subset{R}^n$ is the set of points in $P$ that have an open ball entirely contained in $P$. The topological boundary of $P$ consists of points that are on the interior of $P$. 
Note that the (geometric) interior is an intrinsic property, while the topological interior is an extrinsic property.

We employ the technique of symbolic infinitesimal translation, similar to \cite{de1998computing}. However, unlike \cite{de1998computing}, our problem requires multiple levels of infinitesimal numbers to handle multiple polygons. Given a field $R$, let $R(\!(\varepsilon)\!)$ be the field of Luarent series of $R$. We work over a very large ordered field
\[\mathbb{R}\langle\!\langle \varepsilon_0,\varepsilon_1,\dots \rangle\!\rangle = \bigcup_{n,s > 0} \mathbb{R}\!\left(\!\left(\varepsilon_0^{1/n},\varepsilon_1^{1/n},\dots,\varepsilon_{s-1}^{1/n}\right)\!\right), \]
which is the field of Puiseux series with countably many variables.\footnote{If the base field $R$ is not $\mathbb{R}$, we may need to take an algebraic closure of $R$.} Here, $\varepsilon_s$ is a positive infinitesimal smaller than any positive expression involving only $\varepsilon_0,\dots,\varepsilon_{s-1}$. Then $\mathbb{R}\langle\!\langle \varepsilon_0,\dots \rangle\!\rangle$ is a real closed field \cite[Theorem~2.91]{basu2006}. Hence, assuming constant time computability for basic operations in $\mathbb{R}\langle\!\langle \varepsilon_0,\dots \rangle\!\rangle$, any algorithm in the real RAM model can be executed with the same time complexity using $\mathbb{R}\langle\!\langle \varepsilon_0,\dots \rangle\!\rangle$.

Of course, $\mathbb{R}\langle\!\langle \varepsilon_0,\dots \rangle\!\rangle$ is far from computable, so we limit our usage of it in this paper.
\begin{defi}
A geometric object in $\mathbb{R}^m$ (such as flats, hyperplanes, polytopes, etc.) is called $\varepsilon(s)$-translated if it is defined by equations and  inequalities that involve only linear polynomials of the form
\[ \ab\cdot\xx + b\]
where $\xx\in\mathbb{R}^m$ is a vector of variables, $\ab\in\mathbb{R}^m$ and $b \in \Span_{\mathbb{R}} \{1,\varepsilon_0,\varepsilon_1,\dots,\varepsilon_{s-1}\}$ are constants.
\end{defi}
By restricting the inputs to $\varepsilon(s)$-translated objects, we can usually ensure that the whole computation is performed within a finite $\mathbb{R}$-vector subspace of $\mathbb{R}\langle\!\langle \varepsilon_0,\dots \rangle\!\rangle$ of dimension $O(1)$. This enables us to apply many algorithms involving $\varepsilon(s)$-translated polytopes with the same time complexity, while guaranteeing the mathematical rigor. Specifically, the following algorithms that we use in our work are valid with $\varepsilon(s)$-translated objects:
\begin{enumerate}
\item Computing intersection of two convex polygons \cite[Section 5.2]{shamos1978computational}
\item Computing intersection of two convex polyhedra \cite{chazelle1993b}
\item Computing maximum sectional area of a convex polyhedron \cite[Theorem 3.2]{avis1996}
\item Computing (1/r)-cuttings \cite{chazelle1993}
\item Solving linear programming \cite{megiddo1984}
\end{enumerate}
Moreover, our algorithm performs computations within
\[ \Span_{\mathbb{R}}\left\{\varepsilon_0^{e_0}\varepsilon_1^{e_1}\dots \varepsilon_{2k-4}^{e_{2k-4}} \;\middle|\; 0\leq e_i\leq 2\right\}.\]

\section{Configuration Space}

The aim of this section is to define the configuration space, the domain of the overlap area function, and discuss its properties. Throughout the paper, we take $k$ convex polygons $P_0, P_1, \dots, P_{k-1}$, where $k$ is a constant. Let $\vv_0,\dots,\vv_{k-1}\in\mathbb{R}^2$ be vectors of indeterminates. The overlap area of
\[I = (P_0+\vv_0) \cap (P_1+\vv_1) \cap \dots \cap (P_{k-1}+\vv_{k-1}) \]
is invariant under the map
\[(\vv_0,\dots,\vv_{k-1})\mapsto (\vv_0+\xx,\dots,\vv_{k-1}+\xx).\]
Therefore, we define the configuration space as a $(2k-2)$-dimensional quotient linear space
\[\conf \coloneqq \frac{\{(\vv_0,\dots,\vv_{k-1})\colon \vv_i\in\mathbb{R}^2\}} {\{(\xx,\dots,\xx)\colon \xx\in\mathbb{R}^2\}}. \]
Any element of $\conf$ will be called a placement. We denote $(\vv_0;\dots;\vv_{k-1})\in\conf$ as a placement that corresponds to $(\vv_0,\dots,\vv_{k-1})\in(\mathbb{R}^2)^k$.

We define the overlap area function $\area\colon\conf\rightarrow[0,\infty)$ as
\[\area(\vv_0;\dots;\vv_{k-1}) \coloneqq \left| (P_0+\vv_0) \cap \dots \cap (P_{k-1}+\vv_{k-1}) \right|.\]
and then its support $\supp\area$ is compact. To compute $\area(\vv_0;\dots;\vv_{k-1})$ in linear time, we use the following theorem:
\begin{thm}[Shamos]\label{thm:polygon intersection}
  Let $P$ and $Q$ be convex polygons of $m$ vertices and $n$ vertices, respectively. Then $P\cap Q$ can be computed in $O(m+n)$ time.
\end{thm}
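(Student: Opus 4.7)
The plan is to invoke the classical convex-polygon-intersection algorithm, due to O'Rourke, Chien, Olson, and Naddor (and attributed to Shamos), which produces $P\cap Q$ via a simultaneous linear-time traversal of the two polygon boundaries. First, I would observe that since $P$ and $Q$ are both convex, $P\cap Q$ is itself a convex polygon whose boundary alternates between maximal sub-arcs of $\partial P$ and of $\partial Q$, with adjacent arcs separated by edge-edge crossings. Because any line (and in particular any edge of $P$) meets the boundary of the convex polygon $Q$ in at most two points, the total number of vertices of $P\cap Q$ is at most $2\min(m,n)+m+n = O(m+n)$, which bounds the output size.

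Next, I would describe the two-pointer walk. Orient both $\partial P$ and $\partial Q$ counterclockwise, and maintain a pair of directed edges $e_P$ and $e_Q$ together with a flag indicating which of the two boundaries is currently tracing $\partial(P\cap Q)$. At each iteration I run a constant number of orientation tests on the endpoints of $e_P$ and $e_Q$ and on the wedge formed by their direction vectors; these determine (i) whether $e_P$ and $e_Q$ cross, and (ii) which of the two pointers to advance. Every crossing encountered is emitted as a vertex of the output, and whenever the head of the advancing edge lies inside the opposite polygon it is likewise emitted. The algorithm starts, say, at the bottommost vertex of each polygon and halts once the initial configuration is revisited.

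Correctness amounts to the invariant that, after each step, the pair $(e_P,e_Q)$ contains all of the boundary of $P\cap Q$ traced so far and that the pointer chosen for advancement is the one whose edge cannot currently contribute a new vertex. The standard argument bounds the total angular rotation of each pointer by at most $4\pi$, i.e.\ two full loops around its own polygon, so the total number of iterations is at most $2(m+n)$. Since each iteration performs $O(1)$ work and the output has $O(m+n)$ vertices, the overall running time is $O(m+n)$.

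The main obstacle is verifying the advancement rule: one must show that, under every possible relative configuration of the two oriented edges, the rule both advances a pointer (guaranteeing termination) and never skips a boundary feature of $P\cap Q$ (guaranteeing completeness). I would handle this by a case analysis on the signs of the two orientation tests $\mathrm{sgn}\bigl((\mathrm{head}(e_Q)-\mathrm{tail}(e_P))\times e_P\bigr)$ and the analogous test with $P$ and $Q$ swapped. Degeneracies such as collinear edges or vertex-on-edge incidences can be dissolved uniformly by the $\varepsilon(s)$-translation framework of Section~2, which lets us assume general position without altering the output beyond an infinitesimal perturbation; this keeps the case analysis bounded and the whole algorithm truly $O(m+n)$ on $\varepsilon(s)$-translated inputs.
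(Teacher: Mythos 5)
Your proposal is correct: it reconstructs the standard linear-time boundary-traversal algorithm (the O'Rourke--Chien--Olson--Naddor refinement of Shamos's method), which is precisely the content of the references the paper points to --- the paper's own ``proof'' is a one-line citation to Shamos and to Section~7.6 of O'Rourke's textbook, with no further argument. Since you are spelling out the very algorithm the paper cites (output-size bound, two-pointer advancement rule, rotation bound on the number of iterations, and degeneracy handling consistent with the $\varepsilon(s)$-translation framework), the two approaches coincide in substance; yours is simply more explicit than the paper's.
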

\begin{proof}
  This was first proved by Shamos \cite[Section 5.2]{shamos1978computational}; see also \cite[Section 7.6]{o1998computational}.
\end{proof}

The vertices $(x_0,y_0),\dots,(x_{r-1},y_{r-1})$ of the overlap $I$ can be expressed as linear polynomials in $\vv_0,\dots,\vv_{k-1}$ in a generic setting. Ordering them in counter-clockwise direction, the area of $I$ can be computed using the shoelace formula:
\[ |I| = \frac{1}{2}\sum_{i \in \mathbb{Z}/r\mathbb{Z}}(x_iy_{i+1}-x_{i+1}y_i), \]
where the indices are taken modulo $r$. Therefore, $\area$ is a piecewise quadratic function of $\vv_0,\dots,\vv_{k-1}$.

Note that $\area$ may not be quadratic in two cases:
\begin{enumerate}[label=(\Roman*)]
\item an edge of a polygon $P_i+\vv_i$ contains a vertex of another polygon $P_j+\vv_j$ and \label{event:Type I}
\item edges of three distinct polygons $P_i+\vv_i$, $P_j+\vv_j$ and $P_k+\vv_k$ intersect at one point. \label{event:Type II}
\end{enumerate}
Each of these events defines a polytope in $\conf$ of codimension 1. Following \cite{de1998computing}, we call such a polytope as an event polytope. An event polytope defined by \ref{event:Type I} (resp. \ref{event:Type II}) is called of type I (resp. of type II). A hyperplane containing a type I (resp. type II) event polytope is also called of type I (resp. of type II). There are $O(n^2)$ type I hyperplanes and $O(n^3)$ type II hyperplanes.

\section{Linear Programming}
Let $L \subset \conf$ be an $\varepsilon(s)$-translated $m$-flat. The goal of this section is to provide an $O(n)$ time algorithm that finds a placement $\vv\in L$ such that
\[ \area(\vv)\neq 0.\]
If no such placement exists, the algorithm returns $\none$. 

When working with two polygons, $\supp\area$ is simply the Minkowski sum $P_0 + (-P_1)$, where $-P_1$ is the polygon $P_1$ reflected about the origin. However, when working with more than two polygons, the problem becomes more complex. To tackle this problem, we use linear programming with Meggido's solver.
\begin{thm}[Megiddo \cite{megiddo1984}]\label{thm:linear prog}
  A linear programming problem with a fixed number of variables and $n$ constraints can be solved in $O(n)$ time.
\end{thm}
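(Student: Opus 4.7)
The plan is to prove the theorem by induction on the fixed number of variables $d$, following Megiddo's prune-and-search paradigm. In the base case $d=1$, the feasible region is an interval cut out by $n$ one-sided linear constraints, and the optimum is obtained in $O(n)$ time as the maximum of all lower bounds (or the minimum of all upper bounds, depending on the sign of the objective).

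For the inductive step, suppose the theorem holds in dimension $d-1$. Given $n$ halfspace constraints in $\mathbb{R}^d$, I would pair them into $\lfloor n/2\rfloor$ disjoint pairs. For each pair $(H_i,H_j)$, a short geometric analysis produces a bisecting hyperplane $B_{ij}$ with the property that the side of $B_{ij}$ containing the (unknown) optimum $\xx^*$ determines which of $H_i,H_j$ is locally redundant and can be discarded. Hence, knowing on which side of each $B_{ij}$ the point $\xx^*$ lies would let us prune at least $\lfloor n/4\rfloor$ constraints and recurse on a linearly smaller instance.

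The extraction of this side information is driven by an oracle which, given a hyperplane $H$, decides whether $\xx^*\in H$ or on which side of $H$ it lies. The oracle is implemented by restricting the LP to $H$ (dropping one variable) and invoking the inductive hypothesis, so a single oracle query costs $O(n)$. A naive use of the oracle on all $\lfloor n/2\rfloor$ bisectors would give $O(n^2)$; Megiddo's multidimensional search circumvents this. One sorts the bisectors by a scalar parameter and queries the oracle at their median, so a single oracle call resolves the position of $\xx^*$ relative to half of the bisectors simultaneously. Iterating a constant number of such median queries pins down $\xx^*$ coordinate by coordinate and reveals at least half of the pair outcomes, giving $O(n)$ pruning work per phase. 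The resulting recurrence $T(n)\le T(c n)+O(n)$ for some constant $c<1$ solves to $T(n)=O(n)$.

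The main obstacle is coordinating the multidimensional search in dimension $d\ge 3$: one must simultaneously guarantee that a constant fraction of constraints is eliminated per phase while keeping the number of oracle queries per phase constant. For $d=2$ a single median query suffices, but for larger $d$ one needs nested pairings that respect $d-1$ independent linear orderings, and the careful amortization ensuring both properties hold with the right constants is the technically delicate heart of Megiddo's argument.
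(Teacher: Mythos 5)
This statement is quoted in the paper as a black-box citation of Megiddo's 1984 result; the paper supplies no proof of its own, so there is nothing internal to compare your argument against. Your sketch does faithfully reconstruct the architecture of Megiddo's algorithm: induction on the dimension $d$, the trivial one-dimensional base case, pairing the constraints, an oracle that locates the optimum $\xx^*$ relative to a hyperplane by solving a $(d-1)$-dimensional LP, and a median-based search that resolves a constant fraction of the pair queries with $O(1)$ oracle calls, yielding the recurrence $T(n)\le T(cn)+O(n)$ with $c<1$.

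However, as a proof the proposal has a genuine gap, and you name it yourself: the multidimensional search for $d\ge 3$ is only gestured at. For $d=2$ a single median query over the $x$-coordinates of the pairwise intersection points really does resolve half of the pairs, but for $d\ge 3$ the query hyperplanes $B_{ij}$ live in a $(d-1)$-dimensional family and no single scalar median exists; Megiddo's actual construction recursively groups the hyperplanes, applies the $(d-1)$-dimensional search procedure to the groups, and only guarantees that a fraction of the queries (doubly exponentially small in $d$, but positive for fixed $d$) is resolved per phase. Without exhibiting that construction and verifying its constants, the recurrence is asserted rather than established. Two smaller inaccuracies: the search does not ``pin down $\xx^*$ coordinate by coordinate'' --- it only localizes $\xx^*$ relative to the query hyperplanes, which is all that is needed to discard constraints --- and the oracle must also handle infeasibility and unboundedness of the restricted LP, cases your base case and inductive step pass over. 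Since the paper uses the theorem purely as a cited primitive, the appropriate ``proof'' here is the citation itself; your outline is a reasonable summary of why the citation is believable but does not replace it.
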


Let $n_i$ be the number of vertices of $P_i$. Then $P_i$ is defined by $n_i$ linear inequalities:
\[f_{i,a}(\xx) \geq 0 \quad\text{(for } a<n_i). \]
The codimension of the $m$-flat $L\subset\conf$ is $2k-m-2$. Thus, $L$ is defined by $\varepsilon(s)$-translated $2k-m-2$ linear equations:
\[g_b(\vv) = 0 \quad\text{(for } b<2k-m-2). \]
Then a point $\xx\in\mathbb{R}^2$ and a placement $\vv = (\vv_0;\dots;\vv_{k-1}) \in \conf$ satisfy the constraints
\begin{equation}\label{eqn:constraints}
  \left\{
    \begin{aligned}
      f_{i,a}(\xx-\vv_i) &\geq 0 \quad\text{(for } i<k \text{ and } a<n_i) \text{ and}\\
      g_b(\vv)         &= 0 \quad\text{(for } b<2k-m-2).
    \end{aligned}
  \right.
\end{equation}
if and only if $\xx\in (P_0+\vv_0) \cap\dots\cap (P_{k-1}+\vv_{k-1})$ and $\vv\in L$. Therefore, we obtain the lemma below.

\begin{lem}
  We have $\vv\in L\cap\supp\area$ if and only if $(\xx,\vv)$ satisfies (\ref{eqn:constraints}) for some $\varepsilon(s)$-translated point $\xx$ in a plane.
\end{lem}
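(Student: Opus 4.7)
The plan is to read off the geometric content of (\ref{eqn:constraints}) and then identify $\supp\area$ with the locus where the overlap is nonempty. The equations $g_b(\vv)=0$ encode exactly $\vv\in L$, while for each fixed $i$ the inequalities $f_{i,a}(\xx-\vv_i)\ge 0$, as $a$ ranges over the facets of $P_i$, encode $\xx-\vv_i\in P_i$, i.e., $\xx\in P_i+\vv_i$. Hence the existence of some $\xx$ satisfying (\ref{eqn:constraints}) is equivalent to $\vv\in L$ together with $\bigcap_i(P_i+\vv_i)\neq\emptyset$. The lemma therefore reduces to showing that $\vv\in\supp\area$ if and only if $\bigcap_i(P_i+\vv_i)\neq\emptyset$.

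For the ``only if'' direction, I would set $S:=\{\vv\in\conf:\bigcap_i(P_i+\vv_i)\neq\emptyset\}$ and show $S$ is closed: if $\vv^{(n)}\to\vv$ and $\xx^{(n)}\in\bigcap_i(P_i+\vv_i^{(n)})$, then $\xx^{(n)}\in P_0+\vv_0^{(n)}$ remains in a bounded region, so it has a limit point $\xx$ satisfying $\xx\in P_i+\vv_i$ for every $i$ by continuity of the $f_{i,a}$. Since $S$ trivially contains $\{\vv:\area(\vv)>0\}$, taking closures gives $\supp\area\subseteq S$.

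For the ``if'' direction, suppose $\xx\in\bigcap_i(P_i+\vv_i)$, so $\xx-\vv_i\in P_i$ for each $i$. Because each $P_i$ is a $2$-dimensional convex polygon, there is a direction $\uu_i\in\mathbb{R}^2$ with $\xx-\vv_i-t\uu_i\in\mathrm{int}(P_i)$ for all sufficiently small $t>0$. At the perturbed placement $(\vv_0+t\uu_0;\dots;\vv_{k-1}+t\uu_{k-1})$, the point $\xx$ sits in the topological interior of every translated polygon, so a full neighbourhood of $\xx$ lies in the overlap and $\area>0$ there. Letting $t\to 0^+$ realizes $\vv$ as a limit of positive-area placements, so $\vv\in\supp\area$.

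The one subtlety to guard against is that $L$ is $\varepsilon(s)$-translated, so $\vv$ may have coordinates in $\mathbb{R}\langle\!\langle\varepsilon_0,\dots\rangle\!\rangle$, and ``closure'', ``limit'', and ``neighbourhood'' must be interpreted in the order topology of this real closed field. Both arguments above are semialgebraic and transfer: the closedness of $S$ can alternatively be seen by eliminating $\xx$ via Fourier--Motzkin to exhibit $S$ as a finite system of closed linear inequalities in $\vv$, and in the ``if'' direction one may take $t$ to be a fresh positive infinitesimal when $\vv$ itself involves $\varepsilon$'s. I do not foresee any deeper obstacle beyond keeping the two field-theoretic perspectives consistent.
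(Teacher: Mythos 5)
Your proposal is correct and follows essentially the same route as the paper, which simply observes that (\ref{eqn:constraints}) is satisfiable in $\xx$ exactly when $\vv\in L$ and $\bigcap_i(P_i+\vv_i)\neq\emptyset$ and leaves the identification of this nonemptiness locus with $\supp\area$ implicit. Your closedness and perturbation arguments supply that omitted identification correctly (the perturbed placement need not stay in $L$, since the support is taken in $\conf$), so this is the paper's argument carried out in more detail.
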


Hence, in $O(n)$ time, we can get $\vv\in L\cap\supp\area$, by solving any linear programming with the constraints (\ref{eqn:constraints}). One problem is that $\vv$ might be on the (topological) boundary of $\supp\area$. 

\begin{lem}\label{lem:interior point}
  Let $M$ be the solution set of $\varepsilon(s)$-translated linear constraints
  \begin{equation}\label{eqn:general constraints}
    \left\{
      \begin{aligned}
        p_i(\xx) &\geq 0 \quad\text{(for $i < n$) and}\\
        q_j(\xx) &=    0 \quad\text{(for $j < m$)}
      \end{aligned}
    \right.
  \end{equation}
  where $\xx \in \mathbb{R}^d$ and $d$ is constant. Then we can compute the maximal affinely independent set $S$ in $O(m+n)$ time.
\end{lem}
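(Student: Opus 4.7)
The plan is to build $S$ incrementally via $O(1)$ calls to Megiddo's solver (\Cref{thm:linear prog}), each applied to the $m+n$ constraints (\ref{eqn:general constraints}). First I use one LP to test feasibility and extract a point $\xx_0\in M$; if $M=\emptyset$, return $S=\emptyset$. Otherwise initialize $S=\{\xx_0\}$ and maintain two auxiliary objects: the linear span $W\subset\mathbb{R}^d$ of $\{\xx-\xx_0 : \xx\in S\}$, and a set $E\subset(\mathbb{R}^d)^\ast$ of linear functionals certified to be constant on $M$ (initially empty). The invariant $S\subset M\subset\{\xx : \ell(\xx)=\ell(\xx_0)\text{ for all }\ell\in E\}$ is maintained throughout; in particular each $\ell\in E$ vanishes on $W$.

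The main loop runs while $\dim W+\dim\Span_{\mathbb{R}} E<d$. Pick a nonzero $\vec{d}\in\mathbb{R}^d$ with $\vec{d}\perp W$ and $\ell(\vec{d})=0$ for every $\ell\in E$; because every $\ell\in E$ annihilates $W$, these $\dim W+\dim\Span_{\mathbb{R}} E$ linear constraints on $\vec{d}$ are independent, so the set of valid $\vec{d}$ has positive dimension and one is found by $O(1)$-time linear algebra. Solve two LPs with the constraints (\ref{eqn:general constraints}), maximizing and minimizing $\vec{d}\cdot\xx$ over $M$. If some extremum $\xx^\ast$ satisfies $\vec{d}\cdot\xx^\ast\neq\vec{d}\cdot\xx_0$, then $\xx^\ast$ is affinely independent from $S$ (since $\vec{d}$ is constant on $\mathrm{aff}(S)$), so add $\xx^\ast$ to $S$ and extend $W$; otherwise $\vec{d}\cdot\xx$ is constant on $M$, so add the functional $\xx\mapsto\vec{d}\cdot\xx$ to $E$. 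Each round strictly increases $\dim W$ or $\dim\Span_{\mathbb{R}} E$, so the loop stops within at most $d=O(1)$ iterations, performing at most two LPs per round, for $O(m+n)$ total time by \Cref{thm:linear prog}.

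On exit, $\dim W+\dim\Span_{\mathbb{R}} E=d$. The affine subspace $A=\{\xx : \ell(\xx)=\ell(\xx_0)\text{ for all }\ell\in E\}$ has dimension $d-\dim\Span_{\mathbb{R}} E=\dim W=\dim\mathrm{aff}(S)$, and $\mathrm{aff}(S)\subset A$, so $\mathrm{aff}(S)=A\supset M$. Combined with $S\subset M$, this yields $\mathrm{aff}(S)=\mathrm{aff}(M)$, confirming $S$ is a maximal affinely independent subset of $M$. The only real subtlety I anticipate is ensuring that Megiddo's LP and all auxiliary linear algebra run over the Puiseux extension $\mathbb{R}\langle\!\langle\varepsilon_0,\dots\rangle\!\rangle$ without losing the linear-time bound; this is handled identically to the other $\varepsilon(s)$-translated LP applications already invoked in the paper, since each arithmetic operation on Puiseux elements of bounded complexity takes $O(1)$ time.
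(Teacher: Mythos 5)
Your proposal is correct and follows the same overall strategy as the paper: build $S$ incrementally using $O(1)$ calls to Megiddo's solver, each time optimizing a linear functional over $M$ in two directions and adding an extremum that leaves the current affine hull. The one genuine difference is that your version is more careful at the termination step. The paper simply picks \emph{some} nonzero functional $h$ vanishing on $S$, and asserts that $|S|\leq\dim M$ forces $h(\xx_{\min})<h(\xx_{\max})$; read literally this can fail (e.g.\ $M$ a square in the plane $z=0$ in $\mathbb{R}^3$ with $S$ a single point and $h=z$: the functional is constant on $M$ even though $S$ is far from maximal), so the paper's stopping criterion as stated could terminate prematurely unless $h$ is chosen with more care than is described. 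Your bookkeeping with the set $E$ of functionals certified constant on $M$ --- choosing each new direction orthogonal to both $W$ and $E$, and arguing via $\dim W+\dim\Span_{\mathbb{R}}E=d$ that $\mathrm{aff}(S)\supset M$ on exit --- closes exactly this gap while keeping the round count bounded by $d=O(1)$, so the $O(m+n)$ bound survives. (Minor omission on your side: you do not explicitly perform the paper's reduction eliminating the equality constraints $q_j=0$, but your argument never needs it, since the LPs handle equalities directly.)
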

\begin{proof}
  By \Cref{thm:linear prog}, we can assume that $M\neq\emptyset$. Moreover, by eliminating variables, we may also assume that $m = 0$. To compute the maximal affinely independent set, we start with an empty set $S$ and gradually add points to it. At each step, we look for a new point that is not in the affine hull of the current set $S$.
  
  To do this, we first select a linear functional $h$ that is non-zero but evaluates to zero on all points in $S$. We can find such a functional in constant time since $d$ is a constant. We then find the minimum and maximum values of $h$ subject to the constraints in $M$, denoted by $\xx_{\min}$ and $\xx_{\max}$, respectively.
  
  If $|S| \leq \dim M$, then $h(\xx_{\min})<h(\xx_{\max})$. Therefore, for some $\xx \in \{\xx_{\min},\xx_{\max}\}$, the set $S\cup\{\xx\}$ should be also affinely independent. In this case, we replace $S$ by $S\cup\{\xx\}$. If not, we terminate the process.
\end{proof}

\begin{figure}[ht]
  \centering 
  \newcommand{\findingInteriorPointFigure}[2][]{
    \begin{scope}[#1]
      \coordinate (v0) at (1,1);
      \coordinate (v1) at (3,0);
      \coordinate (v2) at (4,1);
      \coordinate (v3) at (5,3);
      \coordinate (v4) at (2,4);
      \coordinate (v5) at (0,3);
      \coordinate (v6) at (0,2);
      \draw (v0) -- (v1) -- (v2) -- (v3) -- (v4) -- (v5) -- (v6) -- cycle;
      \draw (v0) node[circle,fill,inner sep=2pt,red ]{};
      \draw (v0) node[anchor = north east]{$\xx_0$};
      \draw (v2) node[circle,fill,inner sep=2pt,red ]{};
      \draw (v2) node[anchor = north west]{$\xx_1$};
      #2
    \end{scope}
    }
  \begin{tikzpicture}[thick,line join=round]
    \findingInteriorPointFigure {
      \draw[red] (v0) -- (v2);
      
      \draw (v4) node[circle,fill,inner sep=2pt,blue]{};
      \draw (v4) node[anchor = 210]{$\xx_{\min}$};
      \draw (v1) node[circle,fill,inner sep=2pt,blue]{};
      \draw (v1) node[anchor = north]{$\xx_{\max}$};
      }
     \findingInteriorPointFigure[shift={(7,0)}] {
      \draw[red,fill=red!20] (v0) -- (v2) -- (v4) -- cycle;
      
      \draw (v4) node[circle,fill,inner sep=2pt,red]{};
      \draw (v4) node[anchor = south]{$\xx_2$};
      
      \draw ($1/3*(v0)+1/3*(v2)+1/3*(v4)$) node[circle,fill,inner sep=2pt,blue]{};
      \draw ($1/3*(v0)+1/3*(v2)+1/3*(v4)$) node[anchor = north]{$\xx_{\mathrm{avg}}$}; 
      }
  \end{tikzpicture}
  \caption{Finding a maximal affinely independent set and the topological interior points.}
\end{figure}


  

\begin{thm}\label{thm:nonzero point}
  In $O(n)$ time, we can either return $\vv \in L$ such that $\area(\vv)\neq 0$, or return $\none$ if none exists.
\end{thm}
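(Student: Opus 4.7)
The approach is to reduce the problem to a single application of \Cref{lem:interior point} to the constraint system~\eqref{eqn:constraints}, viewed as defining the convex polytope
\[M = \{(\xx,\vv)\in\mathbb{R}^2\times\conf : (\xx,\vv)\text{ satisfies }\eqref{eqn:constraints}\}\]
sitting inside the $(m+2)$-dimensional flat $\mathbb{R}^2\times L$. The ambient dimension is $2k$ (constant), the system has $O(n)$ inequality constraints coming from the facets of the $P_i$, and $O(1)$ equality constraints cutting out $L$. Hence \Cref{lem:interior point} produces a maximal affinely independent set $S\subseteq M$ in $O(n)$ time. I would then prove the dichotomy: \emph{some $\vv\in L$ satisfies $\area(\vv)\neq 0$ if and only if $|S|=m+3$.}

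For the ``if'' direction the centroid itself provides a witness. When $|S|=m+3$, the affine hull of $M$ fills $\mathbb{R}^2\times L$, so the centroid $(\xx^\star,\vv^\star)$ of $S$ is a relative-interior point of the $(m+2)$-simplex $\mathrm{conv}(S)\subseteq M$. A small open ball of $\mathbb{R}^2\times L$ around $(\xx^\star,\vv^\star)$ therefore lies inside $M$; slicing it at $\vv=\vv^\star$ yields an open planar disk around $\xx^\star$ contained in $\bigcap_i(P_i+\vv^\star_i)$, so $\area(\vv^\star)>0$ and the algorithm outputs $\vv^\star$.

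For the ``only if'' direction I would use continuity of $\area$. Assume that some $\vv^\dagger\in L$ has $\area(\vv^\dagger)>0$. Since $\area$ is continuous and non-negative on $L$, the set $U=\{\vv\in L:\area(\vv)>0\}$ is a nonempty open subset of $L$, of dimension $m$. For each such $\vv$, the fiber $\bigcap_i(P_i+\vv_i)$ has nonempty two-dimensional topological interior in $\mathbb{R}^2$, so $M$ contains the $(m+2)$-dimensional set $\{(\xx,\vv):\vv\in U,\ \xx\in\mathrm{int}\bigcap_i(P_i+\vv_i)\}$, forcing $|S|=m+3$.

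Combining the two directions, the algorithm runs \Cref{thm:linear prog} first to detect emptiness of $M$ (returning $\none$ if empty), then applies \Cref{lem:interior point}; if the resulting $S$ has size $m+3$ it returns the $\vv$-component of the centroid of $S$, and otherwise it returns $\none$. All steps cost $O(n)$. The step I expect to be most delicate is the ``only if'' direction: converting a single positive-area placement $\vv^\dagger$ into an $(m+2)$-dimensional chunk of $M$ needs both continuity of $\area$ on $L$ and the elementary geometric fact that a convex planar set with positive area has nonempty topological interior. Once those two ingredients are in hand, the dichotomy, and therefore the algorithm, follows directly.
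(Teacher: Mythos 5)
Your proposal is correct and follows essentially the same route as the paper: apply \Cref{lem:interior point} to the solution set $M$ of \eqref{eqn:constraints} in $\mathbb{R}^2\times L$, return $\none$ when $|S|\leq m+2$, and otherwise return the $\vv$-component of the centroid of $S$. The only difference is that you spell out the equivalence between $\area(\vv)\neq 0$ and $(\xx,\vv)$ being a topological interior point of $M$ (via continuity of $\area$ and the positive-area-implies-open-disk fact), which the paper asserts in a single sentence.
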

\begin{proof}
  Let $M \subset \mathbb{R}^2\times L$ be the solution set of the constrains (\ref{eqn:constraints}). Then $\area(\vv) \neq 0$, if and only if $(\xx,\vv)$ is an topological interior point of $M \subset \mathbb{R}^2\times L$ for some $\xx\in\mathbb{R}^2$. Applying \Cref{lem:interior point}, we get the maximal affinely independent set $S$ of $M$.
  
  If $|S| \leq m+2$, then $\dim M < 2+\dim L$, and $M$ has no topological interior point, so we return $\none$. If $|S| = m+3$, then
  \[ (\xx_{\mathrm{avg}},\vv_{\mathrm{avg}}) = \frac{1}{|S|}\sum_{(\xx,\vv)\in S} (\xx,\vv)\]
  is an topological interior points of $M \subset \mathbb{R}^2\times L$. Hence, we return $\vv_{\mathrm{avg}}$.
\end{proof}

\section{Decision Problem}
We aim to find the maximum of $\area$ on an $m$-flat $L\subset\conf$ using an induction on $m$. To do so, we apply a prune-and-search technique on the set of event polytopes. However, this technique requires solving a decision problem: given a hyperplane $H\subset L$, we must determine on which side of $H$ the maximum of $\area|_L$ lies. In this section, we provide an algorithm for this decision problem under certain induction hypotheses.

\begin{thm}\label{thm:log-concavity of area}
    The square root of $\area\colon\conf\rightarrow[0,\infty)$ is concave on its support.
\end{thm}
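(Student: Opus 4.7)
The plan is to deduce this from the Brunn--Minkowski inequality in the plane, which states that for any two nonempty compact convex sets $K,K' \subset \mathbb{R}^2$ and $t \in [0,1]$,
\[ \bigl| tK + (1-t)K' \bigr|^{1/2} \;\geq\; t\,|K|^{1/2} + (1-t)\,|K'|^{1/2}. \]

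First I would fix two placements $\vv = (\vv_0;\dots;\vv_{k-1})$ and $\vv' = (\vv_0';\dots;\vv_{k-1}')$ in $\supp \area$, a parameter $t \in [0,1]$, and set $\vv'' = t\vv + (1-t)\vv'$. Writing $I = \bigcap_i (P_i + \vv_i)$, $I' = \bigcap_i (P_i + \vv_i')$, and $I'' = \bigcap_i (P_i + \vv_i'')$, the key geometric observation is the containment
\[ tI + (1-t)I' \;\subset\; I''. \]
Indeed, for each $i$, $tI + (1-t)I' \subset t(P_i + \vv_i) + (1-t)(P_i + \vv_i') = tP_i + (1-t)P_i + \vv_i''$, and since $P_i$ is convex we have $tP_i + (1-t)P_i = P_i$. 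Intersecting over $i$ gives the containment.

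Now I would apply Brunn--Minkowski to $I$ and $I'$, which are convex (as intersections of convex polygons) and nonempty (since $\vv, \vv' \in \supp\area$). Combined with the containment above and monotonicity of area,
\[ \sqrt{\area(\vv'')} \;=\; |I''|^{1/2} \;\geq\; |tI + (1-t)I'|^{1/2} \;\geq\; t\,|I|^{1/2} + (1-t)\,|I'|^{1/2} \;=\; t\sqrt{\area(\vv)} + (1-t)\sqrt{\area(\vv')}. \]
As a side benefit, the same chain of inequalities shows $\area(\vv'') > 0$, so the segment between $\vv$ and $\vv'$ lies in $\supp \area$, confirming that $\supp\area$ is itself convex and that the concavity statement on $\supp\area$ is well-posed.

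There is no real obstacle here; the only subtle point is ensuring that the argument extends to the $\varepsilon(s)$-translated setting if needed, but since the Brunn--Minkowski inequality is a first-order statement valid over any real closed field (via semi-algebraic transfer, cf.\ \cite{basu2006}), it continues to hold in $\mathbb{R}\langle\!\langle \varepsilon_0,\dots \rangle\!\rangle$. Thus $\sqrt{\area}$ is concave on its (convex) support.
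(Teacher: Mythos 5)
Your proposal is correct and follows exactly the route the paper takes: the paper simply cites the Brunn--Minkowski inequality (and Theorem 3.3 of Fukuda--Uno) without writing the details, and your argument is the standard containment $tI + (1-t)I' \subset I''$ plus Brunn--Minkowski that this citation stands for.
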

\begin{proof}
  This follows immediately from the Brunn--Minkowski inequality \cite{minkowski1897}\cite{brunn1887ovale}; see also \cite[Theorem 3.3]{fukuda2007polynomial}.
\end{proof}
Now, we assume the following hypothesis in the rest of this section.

\begin{hyp}\label{hyp:computability of maximum}
  Let $s$ be any constant and $L\subset\conf$ be an $\varepsilon(s)$-translated $(m-1)$-flat. Then we can find $\vv \in L$ maximizing $\area|_L$ in $O(T(n))$ time. 
\end{hyp}

We can partition $L$ into $\varepsilon(s)$-translated open polytopes on which $\area$ is quadratic. Therefore, the maximum $\vv\in L$ of $\area|_L$ is an $\varepsilon(s)$-translated placement.




\begin{thm}\label{thm:hyperplane decision}
  Given an $\varepsilon(s)$-translated $m$-flat $L$ and its $\varepsilon(s)$-translated hyperplane $H\subset L$, let $M\subset L$ be the set of maximum points of $\area|_L$. We can determine which side of $H$ contains $M$ in $O(T(n))$ time.
\end{thm}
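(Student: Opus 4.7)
The plan is to apply \Cref{hyp:computability of maximum} to $H$, a $\varepsilon(s)$-translated $(m-1)$-flat, to locate a maximizer $\vv^{\star} \in H$ of $\area|_H$ in $O(T(n))$ time, and then decide the side of $H$ containing $M$ by an $O(n)$-time probe whose form depends on whether $\area(\vv^{\star})$ is positive or zero. Write $L^{\pm}$ for the two open half-flats of $L$ cut out by $H$, where $L^+$ is the one pointed into by some fixed normal direction $\nn$ to $H$ inside $L$; such an $\nn$ is computable in constant time since $\dim L$ is constant.

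In the main case $\area(\vv^{\star}) > 0$, I form the $\varepsilon(s+1)$-translated placements $\vv^{\pm} := \vv^{\star} \pm \varepsilon_s\nn$ and evaluate $a^{\pm} := \area(\vv^{\pm})$ in $O(n)$ time each via \Cref{thm:polygon intersection}. If $a^+ > \area(\vv^{\star})$ the answer is $L^+$; if $a^- > \area(\vv^{\star})$ the answer is $L^-$; otherwise $\vv^{\star}$ is itself a global maximum, so $M \cap H \neq \emptyset$. For correctness, suppose $a^+ > \area(\vv^{\star})$ but some $\vv_M \in M$ lies in the closed half-flat $\overline{L^-}$. The segment $[\vv_M,\vv^+]$ crosses $H$ at some $\vv_c$; since $\{\area>0\}$ is convex and contains both endpoints, it contains the entire segment, so $\sqrt{\area}$ is concave on it and
\[ \sqrt{\area(\vv_c)} \;\geq\; \min\bigl(\sqrt{\area(\vv_M)},\,\sqrt{\area(\vv^+)}\bigr) \;=\; \sqrt{\area(\vv^+)}, \]
contradicting $\area(\vv_c) \leq \area(\vv^{\star}) < a^+$. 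If neither $a^{\pm}$ exceeds $\area(\vv^{\star})$, then the one-sided derivatives of $\sqrt{\area}$ at $\vv^{\star}$ along $\pm\nn$ are nonpositive; combined with the tangent directions coming for free from $\vv^{\star}$ maximizing $\area|_H$, the sublinearity of one-sided directional derivatives of concave functions forces the one-sided derivative in every direction into $L$ to be nonpositive, so concavity of $\sqrt{\area}$ promotes $\vv^{\star}$ to a global maximum on $L$.

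In the degenerate case $\area(\vv^{\star}) = 0$, necessarily $\area|_H \equiv 0$. A short concavity argument shows $\supp\area|_L$ lies in a single closed half-flat: if it had points $\vv_{\pm}$ with $\area>0$ in both $L^+$ and $L^-$, the segment $[\vv_-,\vv_+]$ would cross $H$ at some $\vv_c$ with $\sqrt{\area(\vv_c)} \geq (1-\lambda)\sqrt{\area(\vv_-)} + \lambda\sqrt{\area(\vv_+)} > 0$ for the appropriate $\lambda \in (0,1)$, contradicting $\area|_H \equiv 0$. I therefore run \Cref{thm:nonzero point} on $L$ itself in $O(n)$ time. If it returns $\none$ then $\area|_L \equiv 0$ and either side is correct; otherwise it returns some $\vv_1 \in L\setminus H$ with $\area(\vv_1) > 0$, and since $\max\area|_L \geq \area(\vv_1) > 0 = \area|_H$ the maximum set $M$ is disjoint from $H$, and by the single-half-flat claim $M$ lies on the same side of $H$ as $\vv_1$.

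The main obstacle will be making the infinitesimal analysis rigorous at a $\vv^{\star}$ that may sit on the intersection of many event hyperplanes simultaneously, so that $\area$ is not a single quadratic near $\vv^{\star}$. The $\varepsilon(s)$-translated framework established in the preceding sections is designed exactly to address this: each evaluation of $\area$ at an $\varepsilon(s+1)$-translated point reduces, via \Cref{thm:polygon intersection}, to arithmetic inside a fixed constant-dimensional $\mathbb{R}$-subspace of $\mathbb{R}\langle\!\langle\varepsilon_0,\dots\rangle\!\rangle$, and the comparison $a^{\pm}\gtrless\area(\vv^{\star})$ in this ordered field reads off the correct one-sided directional behavior of the piecewise quadratic $\area$. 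This preserves the $O(n)$ time bound for every probe, so the total running time matches $O(T(n))$.
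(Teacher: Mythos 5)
There is a genuine gap in the case where neither probe improves on $\area(\vv^\star)$. The paper's proof compares the three quantities $h(-\varepsilon)$, $h(0)$, $h(\varepsilon)$ where $h(t)=\max_{\vv\in t\nn+H}\area(\vv)$, i.e.\ it invokes \Cref{hyp:computability of maximum} \emph{three} times, on the three parallel $(m-1)$-flats $H\pm\varepsilon\nn$ and $H$; unimodality of $h$ (partial maximization of the concave function $\sqrt\area$ over the $H$-directions) is exactly what makes the three-point test sound. You invoke the hypothesis only once, on $H$, and replace the two flat-maximizations by the point evaluations $\area(\vv^\star\pm\varepsilon_s\nn)$. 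Your first two subcases are fine (the segment argument via concavity of $\sqrt\area$ on $\{\area>0\}$ is correct), but the third is not: from ``$\vv^\star$ maximizes $\area$ on $H$ and on the normal line $\vv^\star+\mathbb{R}\nn$'' you cannot conclude that $\vv^\star$ maximizes $\area$ on $L$. The step you call ``sublinearity of one-sided directional derivatives'' goes the wrong way: for a \emph{concave} function $g$ the map $d\mapsto g'(x;d)$ is superadditive, $g'(x;d_1+d_2)\geq g'(x;d_1)+g'(x;d_2)$, which gives a lower bound, not the upper bound you need. Concretely, $g(x,y)=\min(2x-y,\,-x+2y)$ is concave, is maximized over the $x$-axis and over the $y$-axis at the origin, yet $g(1,1)=1>0=g(0,0)$. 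Since $\area$ is genuinely nondifferentiable on the interior of its support (two unit squares already give $\area(\vv)=(1-|v_1|)_+(1-|v_2|)_+$, with a two-dimensional superdifferential at crossings of event hyperplanes), nothing rules out $\vv^\star$ sitting at such a kink with the superdifferential of $\sqrt\area$ meeting both $\mathbb{R}\nn$ and the tangent space of $H$ while missing the origin; your algorithm would then report $M\cap H\neq\emptyset$ and the prune-and-search would discard the region containing the true maximum.

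A secondary, fixable point: you appeal to an implicit claim that $C^1$-type reasoning at $\vv^\star$ is ``made rigorous by the $\varepsilon(s)$-framework,'' but the infinitesimals only guarantee that the sign of $\area(\vv^\star\pm\varepsilon_s\nn)-\area(\vv^\star)$ reads off the one-sided derivatives along $\pm\nn$; they do not repair the logical step above, which concerns directions transversal to both $H$ and $\nn$. The remedy is the paper's: apply \Cref{hyp:computability of maximum} to the translated hyperplanes $H\pm\varepsilon\nn$ themselves (costing $O(T(n))$ each, which is within budget) so that the quantity being compared is the unimodal function $h$. Your handling of the degenerate case $\area|_H\equiv 0$ via \Cref{thm:nonzero point} is correct and matches the paper.
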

\begin{proof}
For any $t\in \mathbb{R}\langle\!\langle \varepsilon_0,\dots \rangle\!\rangle$, let
  \[h(t) = \max_{\vv\in t\nn + H} \area(\vv).\]
Let $N \subset \varepsilon(s)$ be the set of all maximum points of $h(x)$. It suffices to decide on which side $N$ lies with respect to 0. By \Cref{thm:log-concavity of area}, the function $h \colon \varepsilon(s) \rightarrow [0,\infty)_{\varepsilon(s)}$ is unimodal.
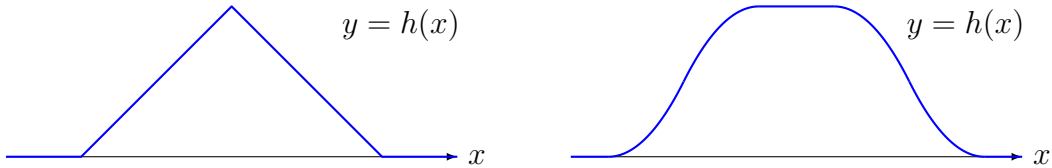
\begin{figure}[ht]
  \centering 
  \begin{tikzpicture}[line join=round]
    \draw[-latex] (-3,0) -- (3,0) node[right] {$x$};
    \node at (2.25,1.75) {$y = h(x)$};
    
    \draw[thick,color=blue] (-3,0) -- (-2,0) -- (0,2) -- (2,0) -- (3,0);
  \end{tikzpicture}
  \hspace{20pt}
  \begin{tikzpicture}[line join=round]
    \draw[-latex] (-3,0) -- (3,0) node[right] {$x$};
    \node at (2.25,1.75) {$y = h(x)$};
    
    \draw[thick,color=blue]  (-3,0) -- (-2.5,0);
    \draw[thick,color=blue]  plot[smooth,domain=-2.5:-1.5] (\x, {  (\x+2.5)^2});
    \draw[thick,color=blue]  plot[smooth,domain=-1.5:-0.5] (\x, {2-(\x+0.5)^2});
    \draw[thick,color=blue]  (-.5,2) -- (.5,2);
    \draw[thick,color=blue]  plot[smooth,domain= 0.5: 1.5] (\x, {2-(\x-0.5)^2});
    \draw[thick,color=blue]  plot[smooth,domain= 1.5: 2.5] (\x, {  (\x-2.5)^2});
    \draw[thick,color=blue]  (2.5,0) -- (3,0);
  \end{tikzpicture}
  \caption{Two possible examples of the graph of $h$}\label{fig:unimodal example}
\end{figure}

By \Cref{hyp:computability of maximum} with $s+1$, we can compute the sequence
\[ S = \big(h(-\varepsilon_{s+1}), h(0), h(\varepsilon_{s+1})\big)\]
in $O(T(n))$ time. If $h(0) = 0$, then all interior points of $\supp h$ lie in the same side with respect to 0. In this case, apply \Cref{thm:nonzero point} and attempt to get one point of $\supp h$. If $h(0) \neq 0$, there are three remaining cases.

  \begin{enumerate}
  \item If $S$ is strictly increasing, then $N \subset (0,\infty)$.
  \item If $S$ is strictly decreasing, then $N \subset (-\infty,0)$.
  \item If $S$ is not strictly monotonic, then $0 \in N$. \qedhere
  \end{enumerate}
\end{proof}

This proof highlights the necessity of infinitesimal translations for our algorithm. Since $s$ only increases in this step, it is bounded by $\dim \conf = 2k-2$ throughout the paper.

\section{Two Polygons}
The goal of this section is to present a linearithmic 
time algorithm for finding a translation that maximizes the overlap area of two convex polygons under translations. This problem was previously studied by de Berg et al. \cite[Theorem 3.8]{de1998computing}, but our approach is different and allows for handling multiple polygons.

In this section, we only have two convex polygons $P = P_0$ and $Q = P_1$ with $n$ and $m$ vertices, respectively. We consider only one translation vector $\vv = \vv_1 - \vv_0$, and since $\conf$ is two-dimensional, we refer to event polytopes and hyperplanes as event line segments and lines, respectively. Since there are no type II line segments, all event line segments can be defined by one of the following two events:
\begin{enumerate}
\item an edge of a polygon $P$ contains a vertex of polygon $Q+\vv$ and \label{event:Type 01}
\item an edge of a polygon $Q+\vv$ contains a vertex of polygon $P$. \label{event:Type 10}
\end{enumerate}
The first type of event lines segment will be called of type $(0,1)$ and the second type of event lines will be called type $(1,0)$ line segments. The same rules apply to event lines.

\begin{figure}[ht]
\newcommand*{\eventLineSegment}[3]{%
  \begin{scope}[shift={(-7,-1)}]
    \draw (0,0) \foreach \vPx/\vPy in {#2}{-- (\vPx,\vPy)};
  \draw[red] (0,0) -- (#3);
  \end{scope}
  
  \begin{scope}[shift={(-4,-1.1340)}]
    \draw (0,0) \foreach \vPx/\vPy in {#1}{-- (\vPx,\vPy)} -- cycle;
  \end{scope}
  
  \foreach \vPx/\vPy in {0/0,#2}{
    \begin{scope}[shift={(\vPx,\vPy)}]
      \draw (0,0) \foreach \vQx/\vQy in {#1}{--(-\vQx,-\vQy)} -- cycle;
    \end{scope}}

  \foreach \vQx/\vQy in {0/0,#1}{
    \begin{scope}[shift={(-\vQx,-\vQy)}]
      \draw (0,0) \foreach \vPx/\vPy in {#2}{-- (\vPx,\vPy)};
    \end{scope}}

  \foreach \vQx/\vQy in {0/0,#1}{
    \begin{scope}[shift={(-\vQx,-\vQy)}]
      \draw[red] (0,0) -- (#3);
    \end{scope}}
}
  \centering 
  \begin{tikzpicture}[thick,line join=round,scale=1.2]
    \node at (-6,-1/3){$P$};
    \node at (-4+2/3,-0.1340){$Q$};
    \eventLineSegment{1.7321/1,0/2}{2/0,1/1.7321}{1,1.7321}
  \end{tikzpicture}
  \caption{Event line segments. The parallel lines of one group are highlighted in red.}\label{fig:event line segments}
\end{figure}

Type $(0,1)$ lines are organized into $n$ groups, each with $m$ parallel lines. Our goal is to efficiently prune this set, requiring an appropriate representation. We use 'arrays' to denote sequential data structures with constant time random access, and assume the size of each array is predetermined.

The $n$ groups of parallel lines are represented by sorted arrays $A_0, A_1, \dots, A_{n-1}$. Each array $A_i$ holds the $y$-intercepts and a single slope value for the lines in the $i$-th group. For vertical lines in $A_i$, we store the $x$-intercepts instead.

\begin{defi}
A slope-intercept array $A$ consists of sorted arrays $A_0, A_1, \dots, A_{n-1}$, each with an associated potentially infinite number. Its number of groups is $n$, and its size $|A|$ is the sum of the sizes of $A_i$. Another slope-intercept array $A'$ is a pruned array of $A$ if it consists of $A$ with identical slopes.
\end{defi}

We can use \cite[Theorem 1.4]{firstPaper} to prune a slope-intercept array $A$, but the description is complicated and the result is weaker. Instead, we rely on a stronger version, which we prove in the appendix.

    
    

\begin{thm}\label{thm:prune-and-search}
For a slope-intercept array $A$ with $n$ groups of lines, we can partition the plane $\mathbb{R}^2$ into four closed quadrants $T_0,\dots,T_3$ using one horizontal line $\ell_0$ and one non-horizontal line $\ell_1$. Additionally, for each $i<4$, we can compute pruned array $P_i$ of $A$ that include all lines intersecting the interior of $P_i$ and have size at least $(7/8)|A|$, all in $O(n)$ time.
\end{thm}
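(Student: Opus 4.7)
The plan is to choose $\ell_0$ and $\ell_1$ via two Megiddo-style weighted-median computations, exploiting the sorted group structure of $A$. The starting observation is that for each group $A_i$ of parallel lines (slope $m_i$, intercepts sorted) and each candidate quadrant $T_j$, the lines of $A_i$ that intersect $\operatorname{int}(T_j)$ form a \emph{contiguous} sub-array of $A_i$: at fixed slope, the condition ``$\ell$ meets $\operatorname{int}(T_j)$'' reduces to a linear inequality on the intercept (derived from the linear inequalities defining $T_j$ via $\ell_0$ and $\ell_1$). Consequently each pruned array $P_i$ decomposes group-by-group into a prefix or suffix, and its size can be read off in $O(1)$ per group once $\ell_0$ and $\ell_1$ are fixed.

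I would first determine $\ell_0$. For each group I would assign a single ``critical'' $y$-value together with its weight (the group size $|A_i|$): for groups parallel to $\ell_0$ this is the median intercept; for non-horizontal groups it is the $y$-coordinate of a canonical representative, e.g.\ the intersection of the group's median line with a reference vertical line. Applying Megiddo's linear-time weighted median (\Cref{thm:linear prog}) to these $n$ weighted points yields $y_0$ and hence $\ell_0$ in $O(n)$ time. With $\ell_0$ fixed I would pick $\ell_1$ by a second weighted median: its slope is taken to match the slope of the heaviest remaining group (so placing $\ell_1$ at that group's median intercept splits the group in half), or any generic non-horizontal slope if no single group dominates; its intercept is determined analogously by a weighted median of critical values recomputed relative to $\ell_0$.

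Finally I would verify the size bound on each $|P_i|$. Since every line meets at most three of the four quadrants, $\sum_i |P_i| \leq 3|A|$, so the average per quadrant is at most $\tfrac{3}{4}|A|$, and the median-based choices keep each individual $|P_i|$ within $\tfrac{1}{8}|A|$ of this average. The main obstacle will be coordinating the two diagonals: positive-slope and negative-slope lines ``miss'' quadrants lying on opposite anti-diagonals, so $\ell_0$ and $\ell_1$ must balance both diagonals simultaneously; in particular the heavy-group step for $\ell_1$ is essential when one slope class dominates the other. I would handle this by a case analysis on the four slope classes (parallel to $\ell_0$, parallel to $\ell_1$, positive, and negative relative to $\ell_1$), deferring the detailed bookkeeping to the appendix as the text indicates. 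The overall $O(n)$ runtime follows from two $O(n)$ weighted medians and $O(1)$ per-group work throughout.
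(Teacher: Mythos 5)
Your overall framing---pick $\ell_0$ by a weighted median over group representatives, then pick $\ell_1$, and exploit that within each group the lines meeting a fixed quadrant's interior form a contiguous block---matches the spirit of the paper, which carries out the same idea in the dual plane, where each group of parallel lines becomes a sorted array of points on a common horizontal line. But there is a genuine gap in how you choose $\ell_1$, and it sits exactly where the content of the theorem lies. Fixing the slope of $\ell_1$ in advance (whether ``the heaviest group's slope'' or ``any generic non-horizontal slope'') and then taking a single weighted median for its intercept only balances the \emph{total} weight across $\ell_1$; it does not balance the part above $\ell_0$ and the part below $\ell_0$ \emph{simultaneously}. Dually: place half the weight in a cluster near $(-1,1)$ and half near $(1,-1)$. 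Any horizontal $\ell_0$ separating them is a valid weighted median, but a median line in any fixed second direction (say vertical) puts one cluster entirely in the upper-left closed quadrant and the other entirely in the lower-right one, leaving the remaining two quadrants empty---so two of the four pruned arrays have size $\approx |A|$ and nothing is pruned. What is required is a ham sandwich cut: a single line that bisects the upper half and the lower half at the same time. Its slope cannot be fixed beforehand, and computing it in $O(n)$ time is the nontrivial ingredient; the paper invokes Megiddo's linear-time ham sandwich algorithm for two linearly separated point sets (\Cref{thm:ham sandwich}), extends it to weighted points by a recursive prune-and-search (\Cref{thm:point weighted prune}), and applies it to the group medians weighted by group sizes (\Cref{thm:point group prune}), which is precisely where the guarantee of $1/8$ per quadrant (rather than $1/4$) comes from.

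Your closing ``verification'' does not close this gap. The inequality $\sum_i |P_i| \le 3|A|$ only bounds the \emph{average}, hence shows that \emph{some} quadrant prunes a quarter of the lines, whereas the theorem needs \emph{every} quadrant to prune an eighth. The assertion that ``the median-based choices keep each individual $|P_i|$ within $\tfrac{1}{8}|A|$ of this average'' is exactly the statement to be proved, and the example above shows it is false for two independently chosen medians. The argument can be run in the primal if you prefer to avoid duality, but the second median must be replaced by a genuine simultaneous-bisection step, and that step is where the $O(n)$ bound has to be earned.
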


Now, we will represent the set of type $(0,1)$ event lines using a slope-intercept array.

\begin{lem}\label{lem:rotating-calipers}
We have $n$ linear functions $f_0,\dots,f_{m-1}$ and $m$ vertices $v_0,\dots,v_{n-1}$ of a convex polygon, both ordered counterclockwise by their gradient vectors and arrangement, respectively. In $O(m+n)$ time, we can find indices $a(0),\dots,a(n-1)$ such that vertex $v_{a(i)}$ minimizes $f_i(v_j)$ for all $j<m$.
\end{lem}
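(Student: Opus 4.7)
The plan is to apply the classical rotating-calipers (two-pointer) technique, exploiting the fact that on a convex polygon the vertex minimizing a linear functional varies monotonically with the gradient direction.

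First I would rewrite each $f_i(v) = \nabla f_i \cdot v + c_i$, so that minimizing $f_i$ over the vertices is equivalent to locating the extreme vertex of the polygon in direction $-\nabla f_i$. Since the $f_i$ are ordered counterclockwise by their gradients, so are the query directions $-\nabla f_i$. A standard fact about convex polygons is that as the query direction $d$ rotates counterclockwise, the extreme vertex in direction $d$ moves monotonically counterclockwise around the polygon and returns to its starting position after one full rotation.

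Next I would implement this observation as a single sweep with a vertex pointer $j$. An initial linear scan locates $a(0)$ in $O(m)$ time. Thereafter, for each successive $i$, I set $j = a(i-1)$ and advance $j$ one step counterclockwise while $f_i(v_{j+1}) \le f_i(v_j)$, recording $a(i)$ as the first position where advancing no longer decreases $f_i$. Correctness follows from convexity: along the ccw boundary cycle the values of any linear functional are unimodal, so any local minimum is the global minimum.

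For the time bound, each advance is constant-time, and because the sequence $a(0), a(1), \ldots, a(n-1)$ is cyclically monotone the pointer wraps at most a constant number of times around the $m$ vertices, so advances total $O(m)$; the $n$ outer iterations contribute an additional $O(n)$ in stop-checks. Altogether the running time is $O(m+n)$. The main, though mild, obstacle is handling degeneracies, namely coincident gradient directions among the $f_i$ or edges of the polygon exactly perpendicular to some $\nabla f_i$, which make the minimizer non-unique; these are resolved either by a fixed tie-breaking rule or by appealing to the $\varepsilon(s)$-translation formalism established earlier, which breaks ties symbolically without affecting the $O(m+n)$ complexity.
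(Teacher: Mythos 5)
Your proposal is correct and follows essentially the same route as the paper's proof: compute $a(0)$ by a brute-force scan, then for each subsequent $i$ advance a pointer from $a(i-1)$ until the (unimodal) functional stops decreasing, with the single-rotation argument giving the $O(m+n)$ bound. Your additional remarks on unimodality justifying local-equals-global minima and on tie-breaking via the $\varepsilon(s)$ formalism are welcome elaborations of details the paper leaves implicit.
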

\begin{proof}
In $O(m)$ time, we can find $a(0)$ by computing all $f_0(v_j)$. Now, suppose that $a(i-1)$ is computed. Then compute the sequence $f_i(v_{a(i-1)}),f_i(v_{a(i-1)+1}),f_i(v_{a(i-1)+2}),\dots$ until it increases after some index $a'$. Then $f_i(v_{a'})$ maximizes $f_i$, so $a(i) = a'$. By repeating this process, we can find all $a(0),a(1),\dots,a(m-1)$. Observe that $v_{a(0)}, v_{a(1)}, \dots, v_{a(n-1)}$ are sorted counterclockwise. Since we only perform one rotation, this process requires $O(m+n)$ time.
\end{proof}



\begin{lem}\label{lem:constructing array}
  In $O(m+n)$ time, we can construct a slope-intercept array of $2n$ groups of size $mn$ representing the set of all type $(0,1)$ lines
\end{lem}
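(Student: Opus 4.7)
The plan is to parametrize the type $(0,1)$ event lines by pairs $(e_i, \ww_j)$ where $e_i$ is an edge of $P$ and $\ww_j$ a vertex of $Q$. If $e_i$ lies on the affine line $\ab_i \cdot \xx = b_i$, the requirement that $\ww_j + \vv$ lie on $e_i$ becomes $\ab_i \cdot \vv = b_i - \ab_i \cdot \ww_j$. All $m$ lines obtained from a fixed edge $e_i$ therefore share the slope determined by $\ab_i$, while their $y$-intercepts (or $x$-intercepts if $\ab_i$ is horizontal) are affine functions of $\ab_i \cdot \ww_j$. So sorting these $m$ lines by intercept is equivalent to sorting the $m$ vertices of $Q$ by the value of the linear functional $\ww \mapsto \ab_i \cdot \ww$.

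Since $Q$ is convex, its cyclic vertex sequence splits into exactly two arcs along which $\ab_i \cdot \ww$ is monotone, meeting at the unique minimizer $\ww_{a(i)}$ and the unique maximizer $\ww_{b(i)}$. For each edge $e_i$ of $P$ I therefore produce two groups, one per monotone arc, each of which is already sorted by intercept. This yields $2n$ groups whose total size is $mn$, since the two arcs partition the $m$ vertices of $Q$ up to a boundary convention.

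To hit $O(m+n)$ total time, I would first invoke \Cref{lem:rotating-calipers} with the $n$ outward edge-normals of $P$ (which are already in counterclockwise order since $P$ is convex) and the $m$ vertices of $Q$, obtaining all minimizer indices $a(i)$ in $O(m+n)$ time; a second invocation with the normals negated yields the maximizers $b(i)$. For each $i$ I then store the two subgroups implicitly as triples (start index, end index, traversal direction) pointing into the cyclic vertex array of $Q$, together with the common slope of the group and a flag indicating whether $x$- or $y$-intercepts are being stored. A single intercept is then recovered in $O(1)$ time by plugging the corresponding vertex coordinates of $Q$ into the formula above, so the resulting data structure supports the constant-time random access demanded by the slope-intercept array definition.

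The only subtlety I anticipate is degeneracy: if some $\ab_i$ is parallel to an edge of $Q$, the extrema of $\ab_i \cdot \ww$ are attained on a whole edge rather than a single vertex, so the two monotone arcs share more than one boundary vertex. This is harmless since ties can be broken consistently (or eliminated by the $\varepsilon(s)$-perturbation framework of Section~3), inflating the total size by at most $O(n)$, which is absorbed into the stated $O(mn)$ bound. The total running time is then dominated by the two rotating-calipers calls plus $O(n)$ setup work, giving $O(m+n)$ as claimed.
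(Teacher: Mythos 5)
Your proposal is correct and follows essentially the same route as the paper: express each type $(0,1)$ line as a translate $-v_j+\ell_i$, use the rotating-calipers lemma to locate the minimizer and maximizer of the linear functional of each edge of $P$ over the vertices of $Q$, split the cyclic vertex order into the two resulting monotone arcs to get $2n$ implicitly stored, intercept-sorted groups with $O(1)$ random access. The only cosmetic differences are your explicit handling of ties (which the paper leaves to the $\varepsilon(s)$-perturbation framework) and your use of two calipers passes where the paper extracts both extrema at once.
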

\begin{proof}
  Let $P$ be a polygon with $n$ linear inequalities $f_i(\xx) \geq 0$, sorted counterclockwise by the gradients of $\nabla f_i$. Let $\ell_i$ be the line defined by $f_i=0$, and let $v_0,\dots,v_{m-1}$ be the vertices of $Q$ sorted counterclockwise and indexed modulo $m$. Then the set of all type $(0,1)$ lines is 
  \[S = \{-v_j + \ell_i \mid i<n \text{ and } j< m \}. \]
  
By using \Cref{lem:rotating-calipers}, we can determine the indices $a(i)$ and $b(i)$ for each $i$, such that $v_{a(i)}$ (resp. $v_{b(i)}$) is the vertex of $Q$ that minimizes (resp. maximizes) $f_i(v_j)$ for all $j<m$. This computation can be done in $O(m+n)$ time. We can then construct two arrays:
\begin{align*}
A_{2i} &\coloneqq (-v_{a(i)} + \ell_i , -v_{a(i)+1} + \ell_i , \dots, -v_{b(i)-1} + \ell_i) \text{ and}\\
A_{2i+1} &\coloneqq (-v_{b(i)} + \ell_i , -v_{b(i)+1} + \ell_i , \dots, -v_{a(i)-1} + \ell_i),
\end{align*}
whose intercepts are sorted. Note that we do not need to compute the entries of $A_i$ explicitly; once we have computed $a(i)$ and $b(i)$, we can perform random access in $O(1)$ time using the formulas above. The resulting arrays $A_0,\dots,A_{2n-1}$ provide a slope-intercept array representing the set of all type $(0,1)$ lines.
\end{proof}
\begin{figure}[ht]
    \centering 
    \begin{tikzpicture}[thick,line join=round]
      \draw (-4,4) -- (-6,0) -- (-2,0.5) node[midway,below]{$\ell_i$} -- cycle;
      \node at (2-6,1.5) {$P$};
      
      \coordinate (v0) at (2  ,0  );
      \coordinate (v1) at (3.5,1  );
      \coordinate (v2) at (4  ,3  );
      \coordinate (v3) at (2  ,4  );
      \coordinate (v4) at (1  ,3.5);
      \coordinate (v5) at (0  ,2  );
      \coordinate (v6) at (0  ,1  );
      
      \draw (v0) node[circle,fill,inner sep=2pt,red ]{} -- 
      (v1) node[circle,fill,inner sep=2pt,red ]{} -- 
      (v2) node[circle,fill,inner sep=2pt,red ]{} -- 
      (v3) node[circle,fill,inner sep=2pt,blue]{} -- 
      (v4) node[circle,fill,inner sep=2pt,blue]{} -- 
      (v5) node[circle,fill,inner sep=2pt,blue]{} -- 
      (v6) node[circle,fill,inner sep=2pt,blue]{} -- 
      cycle;
      
      \draw[red ] (v0)+(-1,-1/8) -- +(1,1/8);
      \draw[red ] (v1)+(-1,-1/8) -- +(1,1/8);
      \draw[red ] (v2)+(-1,-1/8) -- +(1,1/8);
      \draw[blue] (v3)+(-1,-1/8) -- +(1,1/8);
      \draw[blue] (v4)+(-1,-1/8) -- +(1,1/8);
      \draw[blue] (v5)+(-1,-1/8) -- +(1,1/8);
      \draw[blue] (v6)+(-1,-1/8) -- +(1,1/8);
      
      \node[below] at (2,0) {$v_{a(i)}$};
      \node[above] at (2,4) {$v_{b(i)}$};
      \node at (2,2) {$Q$};
    \end{tikzpicture}
    \caption{Visualization of why $A_{2i}$ and $A_{2i+1}$ are sorted.}
  \end{figure}
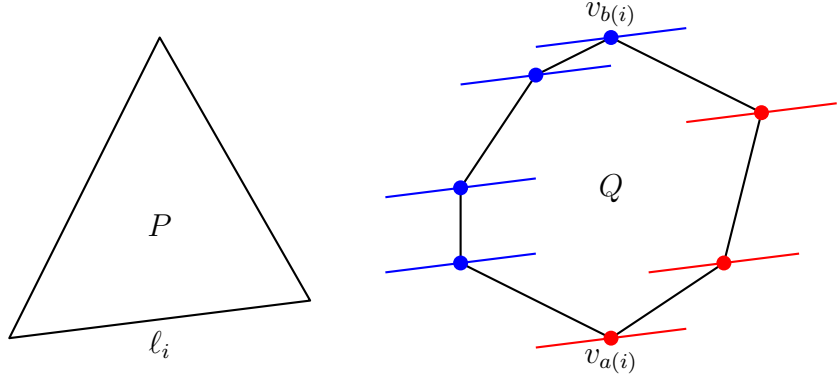

\begin{thm}\label{thm:baby main theorem}
  Let $P$ and $Q$ be convex polygons, with $m$ and $n$ vertices, respectively. In $O((m+n)\log(m+n))$ time, we can finds a translation $\vv \in \mathbb{R}^2$ maximizing the overlap area
  \[\area(\vv)  = |P \cap (Q+\vv)|.\]
\end{thm}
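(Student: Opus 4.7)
The plan is to run the prune-and-search machinery from Sections 3--5 directly on the two-dimensional configuration space $\conf$. First, invoke \Cref{lem:constructing array} twice---once with $P$ and $Q$ as given, once with their roles swapped---to obtain in $O(m+n)$ time a single slope-intercept array $A$ with $O(m+n)$ groups that represents all $O(mn)$ event lines of both type $(0,1)$ and type $(1,0)$.

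Then iterate the following step. Apply \Cref{thm:prune-and-search} to $A$ to get a horizontal line $\ell_0$ and a non-horizontal line $\ell_1$ partitioning $\mathbb{R}^2$ into four closed quadrants $T_0,\dots,T_3$ together with the promised pruning certificates. Invoke \Cref{thm:hyperplane decision} twice, once for $\ell_0$ and once for $\ell_1$, to identify which of the four quadrants $T_j$ contains the set of maximizers of $\area$. The pruning certificate for $T_j$ then lets me discard a constant fraction of the lines in $A$---namely, those irrelevant inside $T_j$---shrinking $|A|$ by a factor bounded away from $1$. After $O(\log(m+n))$ rounds only $O(1)$ event lines remain, and the maximum on the resulting constant-complexity arrangement is found by inspecting the $O(1)$ induced quadratic pieces.

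\Cref{thm:hyperplane decision} requires \Cref{hyp:computability of maximum} for $(m-1)$-flats, which here is the 1-flat maximizer; I would establish this as a prerequisite by an analogous prune-and-search on event points along the line, with the trivial 0-flat base case being one $O(m+n)$-time evaluation of $\area$ via \Cref{thm:polygon intersection}. The main obstacle is the complexity analysis: \Cref{thm:prune-and-search} runs in time linear in the number of groups, which stays $O(m+n)$ throughout, so the outer pruning work telescopes cleanly to $O((m+n)\log(m+n))$, but controlling the total decision cost by the same bound takes care, since each decision invokes the 1-flat maximizer, and I must argue its cost is governed by the geometrically shrinking current line count rather than the initial $O(mn)$, so that summing across the $O(\log(m+n))$ outer rounds yields the single-log bound rather than a spurious $\log^2$ factor.
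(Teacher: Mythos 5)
Your outer loop is exactly the paper's: build the slope-intercept array with \Cref{lem:constructing array}, alternate \Cref{thm:prune-and-search} with two calls to \Cref{thm:hyperplane decision}, keep the quadrant containing the maximizers, and finish on the $O(1)$ remaining lines. The gap is in the ingredient you leave as a prerequisite: the $1$-flat maximizer that \Cref{thm:hyperplane decision} consumes through \Cref{hyp:computability of maximum}. You propose to obtain it by ``an analogous prune-and-search on event points along the line,'' which costs $O((m+n)\log(m+n))$ per query line (a unimodal search over up to $O(mn)$ breakpoints, each probe being an $O(m+n)$ area evaluation). Since every outer round makes $O(1)$ such queries and there are $O(\log(m+n))$ rounds, this yields $O((m+n)\log^{2}(m+n))$, not the claimed single-log bound.

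The amortization you hope for does not close this gap. The breakpoints of $\area|_{\ell}$ along a query line $\ell$ are its intersections with \emph{all} event lines crossing $\ell$, and the decision procedure needs the global maximum of $\area$ on $\ell$ and on its infinitesimal translates; this set of breakpoints is not controlled by the size of the pruned array, which only certifies which lines meet the current quadrant of configuration space. Even if you reworked the decision step to restrict attention to $\ell\cap T$ for the current region $T$ (itself a nontrivial change to the correctness argument of \Cref{thm:hyperplane decision}), the retained line count in round $r$ is only bounded by $(7/8)^{r}\cdot O(mn)$, and summing $\log\bigl((7/8)^{r}mn\bigr)$ over $r$ still gives $\Theta(\log^{2})$. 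The paper avoids the inner logarithm entirely by solving the $1$-flat problem in genuinely linear time: the restriction of $\area$ to a line is the cross-sectional area function of a convex polyhedron (the intersection of two cylinders, as in \Cref{thm:base case} with $k=2$), whose maximum is computed in $O(m+n)$ time by the sectional-area algorithm of Avis et al.\ \cite{avis1996}. Substituting that linear-time subroutine for your inner prune-and-search repairs the bound; the rest of your argument then goes through as in the paper.
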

\begin{proof}
  For any line $\ell \subset \mathbb{R}^2$, we can compute a point $\vv \in \ell$ maximizing $\area|_\ell$ in $O(m+n)$ time by \cite[Corollary 4.1]{avis1996}. Using \Cref{thm:hyperplane decision}, we can determine on which side of $\ell$ the set of maxima of $\area$ lies in $O(m+n)$ time.

By constructing a slope-intercept array $A$ of $(m+n)$ groups with \Cref{lem:constructing array}, we can represent all event lines in $O(m+n)$ time. Applying \Cref{thm:hyperplane decision} to $\ell_0$ and $\ell_1$ obtained from \Cref{thm:prune-and-search}, we can prune $A$ to about 1/8 of its size, and this step requires $O(m+n)$ time. After $O(\log(m+n))$ steps, only $O(1)$ lines remain, and we can find a placement $\vv$ that maximizes the overlap area $\area(\vv)$ directly.
\end{proof}

\section{Several Polygons}

The aim of the section is to give an $O(n\log^{2k-3} n)$ time algorithm to compute $\vv\in\conf$ maximizing $\area$. We first restrict the domain of $\area$ into an $m$-flat $L\subset\conf$ and prove a slightly stronger statement below by induction on $m$.
\begin{thm}\label{thm:stronger main theorem}
  Let $L\subset\conf$ be an $\varepsilon(s)$-translated $m$-flat. Then in $O(n\log^{m-1} n)$ time, we can find $\vv \in L$ maximizing $\area|_L$.
\end{thm}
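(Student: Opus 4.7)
The proof is by induction on $m$. For the base case (small $m$), we compute the maximum directly: when $L$ is a point ($m = 0$), we evaluate $\area$ in $O(n)$ time by iterated application of \Cref{thm:polygon intersection}; when $L$ is a line ($m = 1$), we exploit log-concavity (\Cref{thm:log-concavity of area}) and the fact that the overlaps along $L$ arise as cross-sections of an intersection of $k$ translation-prisms in $\mathbb{R}^3$, maximizing $\area|_L$ in $O(n)$ time by a tailored sectional-area computation in the style of \cite[Theorem 3.2]{avis1996}.

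For the inductive step $m \geq 2$, assume the theorem for $(m-1)$-flats with time $T(n) = n \log^{m-2} n$. Invoking \Cref{hyp:computability of maximum} and \Cref{thm:hyperplane decision} with this $T$, we obtain a decision oracle that, given any $\varepsilon(s)$-translated hyperplane $H \subset L$, determines which side of $H$ contains the maximum of $\area|_L$ in $O(n \log^{m-2} n)$ time. The plan is to run a prune-and-search on the event hyperplanes of $\area$ restricted to $L$, i.e.\ the codimension-$1$ loci in $L$ across which $\area|_L$ changes its quadratic description.

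Each round of the prune-and-search proceeds as follows: compute a $(1/r)$-cutting of the currently surviving event hyperplanes in $L$ via \cite{chazelle1993}, with $r$ a sufficiently large constant; the cutting partitions $L$ into $O(1)$ simplicial cells bounded by $O(1)$ hyperplanes in total. Invoke the decision oracle on each bounding hyperplane to localize the unique cell containing the maximum (total cost $O(n \log^{m-2} n)$ per round), then discard every event hyperplane that does not intersect this cell, shrinking the surviving set by a factor of $1 - 1/r$. After $O(\log n)$ rounds the surviving set has constant size, $\area|_L$ reduces to a single quadratic polynomial on the final cell, and we find its constrained maximum in $O(n)$ time via \Cref{thm:linear prog}. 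The total cost is $O(\log n) \cdot O(n \log^{m-2} n) + O(n) = O(n \log^{m-1} n)$.

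The main obstacle is that the initial number of event hyperplanes is polynomial in $n$ (up to $O(n^3)$ from Type~II events), so a naive cutting computation at the first round would already blow the time budget. Overcoming this requires an implicit, structured representation of the event hyperplanes: Type~I hyperplanes come in parallel families that generalize the slope-intercept arrays of \Cref{lem:constructing array}, and Type~II hyperplanes admit an analogous grouping indexed by their incident polygon edges, so that each cutting and pruning operation can be implemented in time proportional to the number of groups rather than the number of individual hyperplanes, keeping each round within the $O(n \log^{m-2} n)$ budget.
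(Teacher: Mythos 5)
Your overall architecture matches the paper's: induction on the dimension of the flat, a base case for lines via cross-sections of prisms in $\mathbb{R}^3$ in the style of \cite[Theorem 3.2]{avis1996}, a decision oracle from \Cref{thm:hyperplane decision} running in $O(n\log^{m-2}n)$ time, and $O(\log n)$ rounds of prune-and-search with cuttings. You also correctly identify the central obstacle: there are up to $O(n^3)$ type II event hyperplanes, so one cannot feed them all to a cutting algorithm. But your resolution of that obstacle is exactly the point where the proof is missing. You assert that type II hyperplanes ``admit an analogous grouping indexed by their incident polygon edges'' so that cuttings can be computed in time proportional to the number of groups. This is not substantiated, and it is not clear it can be: a type II hyperplane is determined by a triple of edges from three distinct polygons, these hyperplanes do not fall into $O(n)$ parallel families, and Chazelle's cutting algorithm \cite{chazelle1993} needs to report the hyperplanes meeting each cell, which has no evident implementation on an implicit $O(n^3)$-size family within an $O(n)$ budget per round. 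The slope-intercept-array machinery of \Cref{thm:prune-and-search} and \Cref{lem:constructing array} is genuinely two-dimensional and does not generalize to this setting in the way you suggest.

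The paper's actual route is a two-phase argument that sidesteps the issue. Phase one handles only type I hyperplanes: a type $(i,j)$ hyperplane is the preimage under the projection $\pi_{i,j}(\vv)=\vv_i-\vv_j$ of a line in $\mathbb{R}^2$, so for each of the $k(k-1)=O(1)$ ordered pairs one runs the \emph{planar} prune-and-search (\Cref{lem:constructing array} plus \Cref{thm:prune-and-search}) on those lines, answering each side-decision by applying \Cref{thm:hyperplane decision} to the pulled-back hyperplane in $L$; this produces triangles $T_{i,j}\subset\mathbb{R}^2$ and a region $T_I=\bigcap\pi_{i,j}^{-1}(T_{i,j})$ whose interior meets no type I hyperplane, in $O(n\log^{m-1}n)$ total time (\Cref{prop:prune I ij}). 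Phase two then exploits the key structural consequence: on the interior of $T_I$ the set of intersecting edge pairs is combinatorially fixed, and each edge of $P_i$ crosses at most two edges of $P_j$, so only $O(n)$ type II hyperplanes meet the interior of $T_I$. These can be listed explicitly in $O(n)$ time by sampling one placement in $T_I$, and only then does the standard $(1/2)$-cutting iteration of \Cref{thm:cutting} apply (\Cref{prop:prune II}). Without this reduction from $O(n^3)$ to $O(n)$ type II hyperplanes --- or a concrete substitute for your claimed implicit representation --- your inductive step does not go through. (Two minor further points: the maximum of the quadratic $\area|_T$ on the final cell is not a linear program, though it is computable in $O(n)$ time once the quadratic is extracted; and the maximum need not lie in a unique cell, but the three-way outcome of \Cref{thm:hyperplane decision} already accommodates ties on a boundary.)
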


The proof of the base case can be obtained by modifying the proof of \cite[Corollary 4.1]{avis1996}.

\begin{lem}\label{thm:base case}
  Let $\ell\subset\conf$ be an $\varepsilon(s)$-translated line. Then in $O(n)$ time, we can find $\vv\in\ell$ maximizing $\area|_\ell$.
\end{lem}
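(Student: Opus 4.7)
The plan is to lift the one-parameter family of overlaps along $\ell$ to a three-dimensional cross-sectional area problem. First I would parameterize $\ell$ as $t \mapsto \vv(t) = (\vv_0^{(0)} + t\uu_0;\dots;\vv_{k-1}^{(0)} + t\uu_{k-1})$ and treat $t$ as an extra coordinate. For each defining inequality $f_{i,a}(\xx) \geq 0$ of $P_i$, the condition $\xx \in P_i + \vv_i(t)$ becomes $f_{i,a}(\xx - \vv_i^{(0)} - t\uu_i) \ge 0$, which is linear in $(\xx, t) \in \mathbb{R}^3$. Hence the ``swept'' region $\tilde P_i = \{(\xx, t) : \xx \in P_i + \vv_i(t)\}$ is a convex polyhedron in $\mathbb{R}^3$ with $n_i$ facets, and the intersection $\tilde P = \bigcap_{i=0}^{k-1} \tilde P_i$ is a convex polyhedron with at most $n$ facets. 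By construction,
\[ \area(\vv(t)) = \bigl| \tilde P \cap \{z = t\} \bigr|, \]
so finding the maximum of $\area|_\ell$ reduces to finding the horizontal cross-section of $\tilde P$ of maximum area.

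Next I would build $\tilde P$ explicitly in $O(n)$ time by iteratively applying Chazelle's linear-time convex-polyhedron intersection algorithm (item~(2) of the list in Section~2). Since $k$ is constant, the cumulative cost remains $O(n)$. I would then invoke Avis et al.'s algorithm for the maximum sectional area of a convex polyhedron (item~(3) of the list, a variant of \cite[Corollary~4.1]{avis1996}) to locate the horizontal cross-section of $\tilde P$ of maximum area in $O(n)$ time. The unimodality of $t \mapsto |\tilde P \cap \{z = t\}|$ required by the prune-and-search used by Avis et al.\ is furnished by \Cref{thm:log-concavity of area}. The output $t^{\star}$ then yields the desired maximizer $\vv(t^{\star}) \in \ell$.

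The main subtlety will be that $\ell$ is $\varepsilon(s)$-translated rather than literally real-valued, so both the construction of $\tilde P$ and the execution of Avis et al.'s algorithm must remain valid over the Puiseux field $\mathbb{R}\langle\!\langle\varepsilon_0,\dots\rangle\!\rangle$. This is handled by the framework of Section~2: since the direction $\uu_i$, the basepoint $\vv_i^{(0)}$, and all defining inequalities are $\varepsilon(s)$-translated, every intermediate quantity stays in a fixed finite-dimensional $\mathbb{R}$-subspace, and each arithmetic operation and comparison still costs $O(1)$ in the real RAM model. Consequently the linear-time routines of items~(2) and~(3) apply verbatim, yielding the claimed $O(n)$ bound.
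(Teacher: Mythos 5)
Your proposal matches the paper's proof essentially verbatim: the paper also lifts $\ell$ to a third coordinate, forms the cylinders $C_i = \{(x,y,z) : (x,y) \in f_i(z) + P_i\}$, intersects them with Chazelle's linear-time polyhedron-intersection algorithm, and applies the maximum-sectional-area algorithm of Avis et al.\ to the slices $z = t$. The approach and the cited tools are the same, so the proposal is correct.
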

\begin{proof}
We parameterize $\ell$ by 
\[f(t) = (f_0(t), f_1(t), \dots, f_{k-1}(t)),\]
where $f_i\colon \mathbb{R}\rightarrow \mathbb{R}^2$ are $\varepsilon(s)$-translated linear functions. We define cylinders 
\[C_i \coloneqq {(x,y,z)\in\mathbb{R}^3,|, (x,y) \in f_i(z)+P_i}.\]
  
  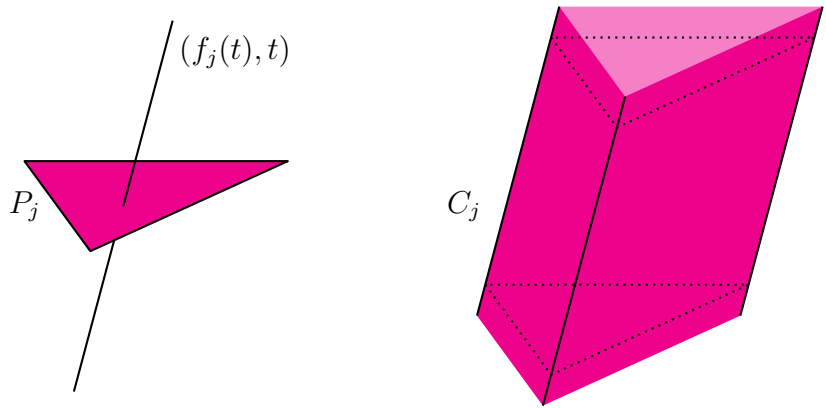
\begin{figure}[ht]
    \centering 
    \tdplotsetmaincoords{110}{0}
    \begin{tikzpicture}[tdplot_main_coords,thick,scale=1.75,line join=round]
      \tikzstyle{facefill} = [fill=magenta]
      \begin{scope}
        \coordinate (A1) at (0  ,0,1);
        \coordinate (B1) at (0.5,2,1);
        \coordinate (C1) at (2  ,0,1);
        
        \coordinate (D0) at (0.75-0.25*1.5,1,-0.5);
        \coordinate (D1) at (0.75     ,1,1);
        \coordinate (D2) at (0.75+0.25*1.5,1, 2.5);
        
        \draw (D0) -- (D1);
        \draw[facefill] (A1) -- (B1) -- (C1) -- cycle;
        \draw (D1) -- (D2);
        
        \node at (0,1,1) {$P_j$};
        \node at (1.6,1,2.25) {$(f_j(t),t)$};
      \end{scope}
      \begin{scope}[shift={(4,0)}]
        \coordinate (A0) at ( 0  ,0,2);
        \coordinate (B0) at ( 0.5,2,2);
        \coordinate (C0) at ( 2  ,0,2);
        \coordinate (A1) at (-0.5,0,0);
        \coordinate (B1) at ( 0  ,2,0);
        \coordinate (C1) at ( 1.5,0,0);
        
        \coordinate (A2) at ( 0.0625,0, 2.25);
        \coordinate (B2) at ( 0.5625,2, 2.25);
        \coordinate (C2) at ( 2.0625,0, 2.25);
        \coordinate (A3) at (-0.5625,0,-0.25);
        \coordinate (B3) at (-0.0625,2,-0.25);
        \coordinate (C3) at ( 1.4375,0,-0.25);
        
        \fill[facefill,fill opacity=0.5] (C2) -- (C3) -- (A3) -- (A2) -- cycle;
        \fill[facefill] (A2) -- (A3) -- (B3) -- (B2) -- cycle;
        \fill[facefill] (B2) -- (B3) -- (C3) -- (C2) -- cycle;
        \draw[dotted] (A0) -- (B0) -- (C0) -- cycle;
        \draw[dotted] (A1) -- (B1) -- (C1) -- cycle;
        \draw         (A2) -- (A3);
        \draw         (B2) -- (B3);
        \draw         (C2) -- (C3);
        
        \node at (-0.66,1,1) {$C_j$};
      \end{scope}
    \end{tikzpicture}
    \caption{Depicting the cylinder $C_i$ obtained from $P_i$ and $\ell$.}\label{fig:cylinder}
  \end{figure}

We can compute $C = C_0\cap C_1\cap\dots\cap C_{k-1}$ in $O(n)$ time using Chazelle's algorithm \cite{chazelle1993b}. Let $H_t\subset\mathbb{R}^3$ be the hyperplane defined by $z = t$. Then we have $|C\cap H_t| = |(P_0+f_0(t))\cap\dots\cap(P_{k-1}+f_{k-1}(t))|$. We can find $t$ maximizing $|C\cap H_t|$ in $O(n)$ time using \cite[Theorem 3.2]{avis1996}. For such a $t$, the maximum point of $\area|_\ell$ is $f(t)\in\ell$.
\end{proof}

Therefore, we assume that $m>1$ and the following induction hypothesis is true.

\begin{hyp}\label{hyp:induction}
  Let $L\subset\conf$ be an $\varepsilon(s)$-translated $(m-1)$-flat. Then we can find $\vv \in L$ maximizing $\area|_L$ in $O(n\log^{m-2} n)$ time.
\end{hyp}

We will first find an $m$-simplex $T_I\subset L$ such that $T_I$ has the maximum point of $\Pi|L$ and no type I hyperplane intersects the interior of $T_I$. Recall that type I hyperplanes are defined by the following event.
\begin{itemize}
\item[\ref{event:Type I}] an edge of a polygon $P_i+\vv_i$ contains a vertex of another polygon $P_j+\vv_j$ and 
\end{itemize}
If $i$ and $j$ are specified, then it will be called a type $(i,j)$ hyperplane. Then type I hyperplanes are grouped into $k(k-1)$ groups, each of which is the set of type $(i,j)$ hyperplanes. Any type $(i,j)$ hyperplane $H$ is defined by a linear equation of the form
\[\nn\cdot(\xx_i-\xx_j) = c\]
for some $\nn\in\mathbb{R}^2$ and $c\in\mathbb{R}$. Consider the projection
\begin{align*}
  \pi_{i,j}\colon \conf &\rightarrow \mathbb{R}^2\\
  \xx &\mapsto \xx_i - \xx_j.
\end{align*}
Then $\pi_{i,j}(H)\subset\mathbb{R}^2$ is a line. Such a line will also be called of type $(i,j)$. Thus, we will find a triangle $T_{i,j}\subset L$ such that no type $(i,j)$ lines intersect the interior of $T_{i,j}$.

\begin{prop}\label{prop:prune I ij}
In $O(n\log^{m-1}n)$ time, We can find a triangle $T_{i,j}\subset\mathbb{R}^2$ such that
\begin{enumerate}
  \item a maximum point of $\area|_L$ lies on $\pi_{i,j}^{-1}(T_{i,j}) \cap L$, and
  \item no type $(i,j)$ lines intersects the interior of $T_{i,j}$.
  \end{enumerate}
\end{prop}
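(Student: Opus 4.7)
The plan is to run a prune-and-search on the collection of type $(i,j)$ lines in $\mathbb{R}^2$, paralleling the proof of \Cref{thm:baby main theorem}. Because a type $(i,j)$ hyperplane in $\conf$ depends only on $\vv_i-\vv_j$, it is the $\pi_{i,j}$-preimage of a line in $\mathbb{R}^2$, and the ``which side of $\ell$ does $\pi_{i,j}(\vv^*)$ lie on?'' question for such a line $\ell$ reduces to the hyperplane decision \Cref{thm:hyperplane decision} applied to $\pi_{i,j}^{-1}(\ell)\cap L \subset L$. By the induction hypothesis \Cref{hyp:induction}, this decision runs in $T(n) = O(n\log^{m-2}n)$ time.

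First I build a slope-intercept array $A$ representing all type $(i,j)$ lines. Since these lines are exactly the event lines of the two-polygon problem on $P_i$ and $P_j$, \Cref{lem:constructing array} supplies $A$ with $O(n_i+n_j) = O(n)$ groups in $O(n)$ time. I then maintain a convex region $R\subset\mathbb{R}^2$, initialized to $\mathbb{R}^2$, guaranteed to contain $\pi_{i,j}(\vv^*)$ for some maximizer $\vv^*$ of $\area|_L$, together with the invariant that every line meeting the interior of $R$ still appears in $A$. Each pruning iteration applies \Cref{thm:prune-and-search} to extract lines $\ell_0,\ell_1$ and four quadrants $T_0,\dots,T_3$, uses two hyperplane-decision queries to select a quadrant $T_r$ whose closure contains some $\pi_{i,j}(\vv^*)$, and updates $R\leftarrow R\cap T_r$ together with $A$ replaced by the corresponding pruned array. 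Each iteration costs $O(n\log^{m-2}n)$ time and shrinks $|A|$ by a constant factor, so after $O(\log n)$ iterations only $O(1)$ lines remain.

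With $|A|=O(1)$, one further decision query per surviving line cuts $R$ down to the cell $C$ of the type-$(i,j)$ arrangement that contains $\pi_{i,j}(\vv^*)$; by construction $C$ is convex, meets no type $(i,j)$ line in its interior, and is bounded by $O(1)$ supporting lines. To extract the desired triangle from $C$: if $C$ is unbounded, I first intersect it with a triangle whose three bounding lines are translated by a fresh infinitesimal $\varepsilon_{s+1}$ far outside $\supp\area$, so that no maximizer's image is lost; then within the now-bounded convex $O(1)$-gon I triangulate and $O(1)$ more decision queries identify the triangle containing $\pi_{i,j}(\vv^*)$, which I output as $T_{i,j}$. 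Summing the costs yields $O(n\log^{m-1}n)$.

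The main obstacle I anticipate is the final triangle-extraction step: the cell $C$ need be neither a triangle nor bounded, so some care is required to set up an infinitesimal bounding triangle that stays $\varepsilon(s')$-translated with $s'$ controlled (at most $s+1$) and that provably contains some maximizer's image rather than merely lying in the same arrangement cell. Everything else is a fairly routine adaptation of the two-polygon prune-and-search to the slice $\pi_{i,j}^{-1}(\cdot)\cap L$.
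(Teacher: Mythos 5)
Your proposal follows essentially the same route as the paper: build the slope-intercept array for the type $(i,j)$ lines via \Cref{lem:constructing array}, prune with \Cref{thm:prune-and-search} using \Cref{thm:hyperplane decision} as the decision oracle on $\pi_{i,j}^{-1}(\ell)\cap L$, and after $O(\log n)$ rounds triangulate the $O(1)$-complexity remaining region. The only difference is that you spell out the final triangle-extraction step (boundedness, the infinitesimal bounding triangle) in more detail than the paper, which simply says ``we triangulate the remaining region''; your added care there is compatible with, not a departure from, the paper's argument.
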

\begin{proof}
The proof is similar to that of \Cref{thm:baby main theorem}. Let $M \subset L$ be the set of placements maximizing $\area|_L$. To determine on which side of a line $\ell$ the set $\pi_{i,j}(M)$ lies, we apply \Cref{thm:hyperplane decision}, which takes $O(n\log^{m-2}n)$ time.

We can represent all type-$(i,j)$ lines by a slope-intercept array $A$ in $O(n)$ time, as shown in \Cref{lem:constructing array}. Applying \Cref{thm:prune-and-search} to obtain lines $\ell_0$ and $\ell_1$, we can prune $A$ to about 1/8 of its size using \Cref{thm:hyperplane decision}. This step requires $O(n\log^{m-2}n)$ time. After $O(\log n)$ steps, only $O(1)$ lines remain, and then we triangulate the remaining region. This gives a triangle $T_{i,j}$ with the desired properties in $O(n\log^{m-1}n)$ time.
\end{proof}

Now, define
\begin{equation}\label{def:prune I}
  T_I \coloneqq \bigcap_{i,j < d} \pi_{i,j}^{-1}(T_{i,j})\subset L.
\end{equation}
Then $T_I$ is defined by $3 k(k-1) \in O(1)$ linear polynomials, and by construction, no type I hyperplanes intersect the interior of $T_I$. Our goal now is to find an $m$-simplex $T\subset T_I$ such that $T$ has the maximum point of $\Pi|_L$ and no event polytopes intersect the interior of $T$.

To achieve this, we first note that only $O(n)$ type II hyperplanes intersect the interior of $T_I$. Thus, we can obtain $T$ by repeatedly applying Chazelle's cutting algorithm.

\begin{defi}[Matou\v{s}ek \cite{matouvsek1990cutting}]\label{def:cutting}
  A cutting of $\mathbb{R}^d$ is a collection $C$ of possibly unbounded $d$-simplices with disjoint interiors, which together cover $\mathbb{R}^d$. Let $S$ be a set of $n$ hyperplanes in $\mathbb{R}^d$. Then a cutting $C$ is a $(1/2)$-cutting for $S$ if the interior of each simplex intersects at most $n/2$ hyperplanes. 
\end{defi}

\begin{thm}[Chazelle \cite{chazelle1993}]\label{thm:cutting}
  With the notation in \Cref{def:cutting}, a $(1/2)$-cutting of size $O(2^d)$ can be computed in $O(n2^{d-1})$ time. In addition, the set of hyperplanes intersecting each simplex of the cutting is reported in the same time. 
\end{thm}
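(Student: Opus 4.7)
The plan is to construct the cutting by the classical random-sampling-plus-derandomization scheme. Fix a constant $r_0 = r_0(d)$ depending only on $d$, draw a uniform random subset $R\subset H$ of size $r_0$, compute the full arrangement $\mathcal{A}(R)$, and triangulate each of its cells using the bottom-vertex triangulation. Since $|R|=O(1)$, the arrangement has $O(r_0^d)=O(2^d)$ cells and the triangulation step produces $O(2^d)$ closed simplices that cover $\mathbb{R}^d$ with pairwise disjoint interiors, as demanded by \Cref{def:cutting}.

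The combinatorial core of the argument is to show that for an appropriate constant $r_0$, with positive probability every simplex $\sigma$ of this decomposition is crossed in its interior by at most $n/2$ hyperplanes of $H$. I would deduce this from the $\varepsilon$-net theorem of Haussler and Welzl applied to the range space whose ranges are the sets of hyperplanes crossing a simplex; since the VC-dimension is bounded in terms of $d$ alone, a sample of size depending only on $d$ suffices. Because such a sample exists, one can find it deterministically by the method of conditional probabilities: the sample space has size $\binom{n}{r_0} = O(n^{r_0})$, and a pessimistic estimator that counts potentially overloaded simplices can be maintained in amortized $O(1)$ time per update, giving total derandomization time $O(n)$.

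Once the simplices are fixed, we enumerate all $n$ hyperplanes and, for each, determine the subset of the $O(2^d)$ simplices whose interiors it crosses by evaluating signs of linear forms at the simplex vertices, which takes $O(2^d)$ time per hyperplane. Since every simplex is crossed by at most $n/2$ hyperplanes, the total number of reported incidences is at most $2^d\cdot n/2 = n\cdot 2^{d-1}$, matching the claimed running time.

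The main obstacle is the derandomization step: a naive enumeration of all samples is too costly, so the pessimistic estimator must be engineered so its conditional updates are cheap and the constants in the size bound do not blow up. For constant $d$ and a $(1/2)$-cutting the single-level sampling described here already delivers the optimal $O(2^d)$ size; extending the same idea to $(1/r)$-cuttings with $r$ growing would instead require Chazelle's hierarchical refinement on top of this construction, which I would avoid invoking since only the $r=2$ case is needed here.
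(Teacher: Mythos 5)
The paper offers no proof of this statement: it is imported verbatim (the $r=2$ case of $O(r^d)$ size and $O(nr^{d-1})$ time) from Chazelle's deterministic cutting theorem, so the question is whether your sketch actually establishes it from scratch. The randomized half of your argument is sound and standard: a uniform sample $R$ of constant size is, with positive probability, a $(1/2)$-net for the range space whose ranges are the sets of hyperplanes crossing a simplex (this range space has VC dimension bounded in terms of $d$), each simplex of the bottom-vertex triangulation of $\mathcal{A}(R)$ lies inside a cell of the arrangement and is therefore crossed by at most $n/2$ input hyperplanes, and the reporting step by testing every hyperplane against each of the $O(1)$ simplices is fine. One small quibble: a $(1/2)$-net requires $r_0=\Theta(d)$ sample points rather than $2$, so single-level sampling yields $O(r_0^d)$ simplices, not $O(2^d)$; this is harmless here because the paper only ever uses $d\le 2k-2$ constant, but your closing claim that this construction "already delivers the optimal $O(2^d)$ size" is not literally true as a function of $d$.

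The genuine gap is the derandomization, which is the entire content of Chazelle's theorem and which you assert rather than prove. The method of conditional probabilities requires, at each of the $r_0$ selection steps, evaluating a pessimistic estimator for each of up to $n$ candidate hyperplanes; that estimator must upper-bound the conditional probability that some simplex of the eventual triangulation is overloaded, which involves a sum over the canonical simplices definable from the partial sample together with crossing counts against all $n$ input hyperplanes. Your sketch does not say what the estimator is, why it is pessimistic, or why it can be updated in amortized $O(1)$ time; a naive evaluation costs $\Omega(n)$ per candidate and $\Omega(n^2)$ overall, and making it cheap is precisely where the difficulty of Chazelle's paper lies. If you want a self-contained argument at the level of generality actually needed here (constant $d$, $r=2$), a cleaner route is to replace the random sample by a deterministic $\varepsilon$-approximation of constant size, computable in $O(n)$ time by the merge-and-reduce halving technique for range spaces of bounded VC dimension, and triangulate its arrangement; otherwise, simply cite Chazelle as the paper does.
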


\begin{prop}\label{prop:prune II}
  In $O(n\log^{m-1}n)$ time, we can find an $\varepsilon(s)$-translated $m$-simplex $T\subset L$ such that
  \begin{enumerate}
  \item the maximum point of $\area|_L$ lies on $T$, and
  \item no event polytope intersects the interior of $T$.
  \end{enumerate}
\end{prop}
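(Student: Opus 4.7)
The plan is to start from the region $T_I$ produced by \Cref{prop:prune I ij}, whose interior is already free of type I hyperplanes, and iteratively eliminate the remaining type II hyperplanes via Chazelle's cuttings. The crucial preliminary claim, flagged in the discussion just before the statement, is that only $O(n)$ type II hyperplanes intersect the interior of $T_I$. I would justify this by exploiting that no type I event occurs in $T_I$, so the combinatorial structure of each pairwise intersection $(P_i+\vv_i)\cap(P_j+\vv_j)$ is frozen throughout $T_I$; by convexity only $O(n_i+n_j)$ edges of $P_i$ and $P_j$ can cross at any placement in $T_I$, and the very same pairs of edges cross everywhere in $T_I$. A type II hyperplane arises from a triple of such crossings that can be made concurrent, and the rigidity of the pairwise crossings pins the possible triples down to $O(n)$, which one can enumerate in $O(n)$ time by scanning the frozen pairwise intersection structure at a witness placement.

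With this list of $O(n)$ type II hyperplanes in hand, I apply \Cref{thm:cutting} inside the ambient $m$-flat $L$ to build a $(1/2)$-cutting of $O(2^m) = O(1)$ simplices together with the sublists of hyperplanes crossing each, in $O(n)$ time. The maximum of $\area|_L$ lies in some cutting simplex; I identify which by applying \Cref{thm:hyperplane decision} to the $O(1)$ hyperplanes that separate cutting simplices, each decision running in $O(n\log^{m-2}n)$ time by \Cref{hyp:induction}. I then replace the working region by its intersection with the chosen simplex, which is still defined by $O(1)$ linear inequalities and hence is still an $\varepsilon(s)$-translated polytope, now crossed by at most half the previously active type II hyperplanes.

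Iterating this prune-and-search for $O(\log n)$ rounds drives the number of active type II hyperplanes to zero while keeping the global maximum inside the shrinking region. At that point no event polytope meets the interior of the remaining $O(1)$-complexity polytope, so I triangulate it into $m$-simplices and select the one containing the maximum via $O(1)$ further applications of \Cref{thm:hyperplane decision}; this simplex is the desired $T$. Each round costs $O(n) + O(n\log^{m-2}n) = O(n\log^{m-2}n)$, and $O(\log n)$ rounds give the claimed total of $O(n\log^{m-1}n)$.

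The main obstacle is the $O(n)$ bound on type II hyperplanes crossing $T_I$, which collapses the naive $O(n^3)$ count and is what makes the whole induction close. Without it, each Chazelle cutting would already cost $\Omega(n^3)$ and the target bound would be out of reach. The bound rests on the rigidity of pairwise arrangements inside a type I-free region: although type II hyperplanes are parameterized by triples $(e_i,e_j,e_l)$ of edges from three polygons, only triples that pairwise cross in some placement within $T_I$ contribute, and the freezing of edge-edge crossings across $T_I$ reduces the relevant universe from $O(n^3)$ to $O(n)$.
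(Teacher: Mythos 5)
Your proposal is correct and follows essentially the same route as the paper: it starts from $T_I$, bounds the number of type II hyperplanes crossing its interior by $O(n)$ via the frozen pairwise edge-crossing structure, and then alternates Chazelle's $(1/2)$-cuttings with \Cref{thm:hyperplane decision} for $O(\log n)$ rounds at cost $O(n\log^{m-2}n)$ each. The only differences are minor bookkeeping (you intersect with the chosen cutting simplex and triangulate at the end, while the paper folds the facets of the current region into the hyperplane set for the next cutting), which does not change the argument.
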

\begin{proof}
 Take $T_I$ as defined in (\ref{def:prune I}). By construction, no type I hyperplane intersects the interior of $T_I \subset L$. Therefore, the set of pairs of intersecting edges of $P_i$ and $P_j$ does not depend on the placement $\vv \in T_I$. Moreover, every edge of $P_i$ intersects at most two edges of $P_j$. Therefore, there are at most
  \[\binom{d}{3} 4n \in O(n)\]
type II polytopes intersecting the interior of $T_I$. In $O(n)$ time, we can compute the set $S$ containing all such type II hyperplanes by sampling a placement $\vv$ in the interior of $T_I$.
  
To find a simplex $T$ satisfying the conditions of \Cref{prop:prune II}, we first set $T = T_I$. Then we define $S$ as the set of hyperplanes in $L$ containing a facet of $T$ or a type II polytope that intersects the interior of $T$. We can compute a $(1/2)$-cutting $C$ of size $O(1)$ for $S$ in $O(n)$ time using \Cref{thm:cutting}. Using \Cref{thm:hyperplane decision}, we can then find a simplex $T' \in C$ containing the maximum point of $\area|_L$ in $O(n\log^{m-2}n)$ time. We set $T = T'$ and repeat this process $O(\log n)$ times until no type II polytopes intersect the interior of $T$.
\end{proof}

\begin{proof}[proof of \Cref{thm:stronger main theorem}]
  We can find $T$ as in \Cref{prop:prune II} and compute $\Pi|_T$, which is a quadratic polynomial. Then we can directly compute the maximum point of $\Pi|_T$.
\end{proof}

\mainTheorem*
\begin{proof}
  This is a corollary of \Cref{thm:stronger main theorem} with $R = \mathbb{R}$ and $m = 2k-2$.
\end{proof}

\section{Set of Maxima}
Our next step is to determine the set $M\subset\conf$ of placements $\vv\in\conf$ that maximize the overlap area $\area$. Once we identify at least one such placement, the problem becomes easy, as every maximal overlap is the same up to translation. To accomplish this, we rely on the equality condition of the Brunn-Minkowski inequality.

\begin{thm}[Minkowski]\label{thm:B-M equality}
Let $A$ and $B$ be compact subsets of $\mathbb{R}^2$ with nonzero area. Then 
\[\left|\frac{1}{2}A+\frac{1}{2}B\right|^{1/2} \geq \frac{1}{2}|A|^{1/2}+\frac{1}{2}|B|^{1/2}, \]
and the equality holds if and only if $A$ and $B$ are homothetic.
\end{thm}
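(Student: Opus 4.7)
The plan is to split the statement into the inequality and the equality characterization. For the inequality, I would invoke the two-dimensional Brunn--Minkowski inequality, which follows from any of the standard proofs (Steiner symmetrization, Prékopa--Leindler, or induction on box decompositions). In practice I would simply cite a reference such as Schneider's monograph on the Brunn--Minkowski theory, since the result is textbook. The easy direction of the equality claim, that two homothetic sets achieve equality, follows from direct computation: if $B = \lambda A + \vv$ for some $\lambda > 0$ and $\vv \in \mathbb{R}^2$, then $\tfrac{1}{2}A + \tfrac{1}{2}B$ is a translate of $\tfrac{1+\lambda}{2}A$, and both sides of the stated inequality evaluate to $\tfrac{1+\lambda}{2}|A|^{1/2}$.

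The substantive part is the forward direction of the equality characterization. First, I would reduce to the case where $A$ and $B$ are convex. The idea is that passing to convex hulls strictly increases the area of a non-convex compact set, while the Minkowski sum $\mathrm{conv}(A) + \mathrm{conv}(B) = \mathrm{conv}(A+B)$ contains $A + B$; applying Brunn--Minkowski to the hulls and comparing with the assumed equality for $A$ and $B$ then forces the hull operation to be trivial (modulo null sets). This reduction requires some measure-theoretic care since equality is only meaningful up to null sets, but the essential argument is classical.

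Once restricted to convex bodies in the plane, I would apply the Steiner symmetrization method. Steiner symmetrization in any direction preserves the area of $A$ and $B$ individually and does not increase the area of $A + B$; hence equality in Brunn--Minkowski is preserved under symmetrization. Iterating symmetrizations in a countable dense set of directions drives both bodies toward discs of matching radii, and tracking which symmetrizations can fail to strictly decrease $|A/2 + B/2|$ shows that $A$ and $B$ must already differ by a homothety in the original configuration. This last step, which characterizes the equality case of Brunn--Minkowski for convex bodies, is the main technical obstacle I expect. Because the theorem is attributed to Minkowski and is classical, the cleanest presentation in the paper is to cite the result (for example via Schneider's treatise) rather than reproduce the symmetrization argument in full.
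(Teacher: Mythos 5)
The paper offers no proof of this statement at all: it is recorded as a classical theorem of Minkowski (the inequality itself having been cited earlier to Brunn and Minkowski), so your closing recommendation to simply cite the literature is precisely what the paper does. Your sketch of the standard argument---the easy direction by direct computation, reduction to convex hulls (which is valid for compact sets of positive area, since a compact set whose convex hull has equal area must itself be convex), and then the equality analysis for convex bodies---is a correct outline of one classical route, so there is nothing to fault.
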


We define $I(\vv)$ for any placement $\vv \in \conf$, as follows:
\[I(\vv)\coloneqq (P_0+\vv_0) \cap \dots \cap (P_{k-1}+\vv_{k-1}).\]

\begin{lem}
Let $\vv,\uu\in\conf$ be two placements that both maximize $\area$. Then $I(\uu)$ and $I(\vv)$ are equivalent up to translation.
\end{lem}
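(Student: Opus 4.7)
The plan is to apply the equality case of the Brunn--Minkowski inequality (Theorem~\ref{thm:B-M equality}) to the intersection polygons $I(\vv)$ and $I(\uu)$, using the midpoint placement $\ww = \tfrac{1}{2}\vv + \tfrac{1}{2}\uu$ as a bridge.

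First, I would establish the Minkowski-sum inclusion
\[ \tfrac{1}{2} I(\vv) + \tfrac{1}{2} I(\uu) \subseteq I(\ww). \]
This is straightforward: if $\xx - \vv_i \in P_i$ and $\yy - \uu_i \in P_i$ for all $i$, then convexity of $P_i$ gives $\tfrac{1}{2}(\xx+\yy) - \tfrac{1}{2}(\vv_i+\uu_i) \in P_i$, so $\tfrac{1}{2}(\xx+\yy) \in I(\ww)$. Taking areas, $\bigl|\tfrac12 I(\vv)+\tfrac12 I(\uu)\bigr| \le |I(\ww)|$.

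Next, I would chain this with maximality of $\area(\vv) = \area(\uu)$. Since these are the maximum value of $\area$, we have $|I(\ww)| \le |I(\vv)| = |I(\uu)|$. Combining with Brunn--Minkowski,
\[ |I(\vv)|^{1/2} = \tfrac{1}{2}|I(\vv)|^{1/2} + \tfrac{1}{2}|I(\uu)|^{1/2} \le \bigl|\tfrac12 I(\vv)+\tfrac12 I(\uu)\bigr|^{1/2} \le |I(\ww)|^{1/2} \le |I(\vv)|^{1/2}, \]
so equality holds throughout. (We may assume $|I(\vv)|>0$; otherwise the maximum area is $0$ and every $I$ is empty, hence trivially translates of each other.)

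Finally, the equality clause of Theorem~\ref{thm:B-M equality} forces $I(\vv)$ and $I(\uu)$ to be homothetic, i.e., $I(\vv) = \lambda I(\uu) + \cc$ for some $\lambda > 0$ and $\cc \in \mathbb{R}^2$. Taking areas of both sides gives $\lambda^2 = 1$, so $\lambda = 1$ and $I(\vv) = I(\uu) + \cc$. The main (minor) subtlety is the degenerate case where the maximum overlap area is $0$, which must be handled separately but is immediate. Otherwise the argument is a clean three-line application of Brunn--Minkowski equality, requiring no combinatorial analysis of the polygon structure.
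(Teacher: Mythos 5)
Your proof is correct and follows essentially the same route as the paper: the Minkowski-sum inclusion into $I\bigl(\tfrac{\uu+\vv}{2}\bigr)$, maximality to force equality in Brunn--Minkowski, homothety, and then equal areas to upgrade homothety to a translation. The only addition is your explicit handling of the zero-area case (where, strictly speaking, $I$ could be a point or segment rather than empty), which the paper leaves implicit since \Cref{thm:B-M equality} assumes nonzero area.
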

\begin{proof}
  Since $P_0,\dots,P_{k-1}$ are convex,
  \[ \frac{1}{2}I(\uu) + \frac{1}{2}I(\vv) \subset I\left(\frac{\uu+\vv}{2}\right). \]
  Therefore,
  \[  \left|\frac{1}{2}I(\uu) + \frac{1}{2}I(\vv)\right| \leq \left|I\left(\frac{\uu+\vv}{2}\right)\right| \leq |I(\vv)|. \]
  As a result, $I(\uu)$ and $I(\vv)$ are homothetic by \Cref{thm:B-M equality}. Since $|I(\vv)| = |I(\uu)|$, this implies that $I(\uu)$ and $I(\vv)$ are equivalent up to translation.
\end{proof}

  We then fix a maximal overlap $I_{\max} \subset \mathbb{R}^2$. The set of all $\vv_i$ such that $I_{\max} \subset \vv_i + P_i$ is given by the Minkowski difference
\begin{align*}
(-P_i) - (-I_{\max}) &= \{\xx \in \mathbb{R}^2 \mid \xx + (-I_{\max}) \subset - P_i\} \\
&= \{\xx \in \mathbb{R}^2 \mid I_{\max} \subset \xx + P_i\}.
\end{align*}
We define $N \coloneqq \prod_{i<m} (P_i - I_{\max})$ and let $\pi\colon(\mathbb{R}^2)^k \rightarrow\conf$ be the natural quotient. 
\begin{lem}
  The restricted map $\pi|_N\colon N \rightarrow M$ is an affine isomorphism.
\end{lem}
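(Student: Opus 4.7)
The plan is to verify three things about $\pi|_N$: it maps $N$ into $M$ (well-definedness), it is a bijection, and it is affine. Affinity is automatic once the first two hold, since $\pi$ is a linear map and $N$ is a product of Minkowski-difference polytopes, so $\pi|_N$ is the restriction of a linear map to a convex polytope; a bijective restriction of a linear map is an affine isomorphism onto its image.

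For well-definedness I would show the stronger statement that $\vv \in N$ forces $I(\vv) = I_{\max}$, so in particular $\pi(\vv) \in M$. By the definition of $N_i$, each component satisfies $I_{\max} \subset \vv_i + P_i$, so intersecting over $i$ gives $I_{\max} \subset I(\vv)$. Since $|I_{\max}|$ is the maximum value of $\area$, we get $|I(\vv)| \leq |I_{\max}|$ and hence $|I(\vv)| = |I_{\max}|$. Now I invoke the routine convex-geometry fact that a compact convex subset of $\mathbb{R}^2$ cannot strictly contain another compact convex set of the same positive area (any point in the difference could be separated from $I_{\max}$ by a line and would generate strictly positive extra area). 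This forces $I(\vv) = I_{\max}$.

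Bijectivity then follows from the preceding lemma. For surjectivity, given $[\vv] \in M$, that lemma provides $\xx \in \mathbb{R}^2$ with $I(\vv) = I_{\max} + \xx$; replacing $\vv$ by $\vv' \coloneqq (\vv_0 - \xx, \dots, \vv_{k-1} - \xx)$ gives another representative of $[\vv]$ with $I(\vv') = I_{\max}$, and then $I_{\max} \subset \vv_i' + P_i$ puts $\vv' \in N$. For injectivity, suppose $\vv, \uu \in N$ with $\pi(\vv) = \pi(\uu)$, so that $\uu - \vv = (\xx, \dots, \xx)$ for some $\xx \in \mathbb{R}^2$. By well-definedness, $I(\vv) = I(\uu) = I_{\max}$, but $I(\uu) = I(\vv) + \xx$, forcing $\xx = 0$ and $\vv = \uu$.

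The only nontrivial ingredient is the containment-plus-equal-area step in well-definedness; everything else is bookkeeping with the quotient $\pi$ and the definition of the Minkowski difference $N_i$, where one must read $P_i - I_{\max}$ as the set of $\vv_i$ satisfying $I_{\max} \subset \vv_i + P_i$ (matching the formula displayed just before the lemma). I expect no further subtleties: dimensions match because $N$ sits in $(\mathbb{R}^2)^k$ but the injectivity step shows that the diagonal direction is forced to zero on $N$, so $\dim N = \dim M$ is consistent with an affine isomorphism.
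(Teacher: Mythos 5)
Your proof is correct and follows essentially the same route as the paper: surjectivity comes from translating a maximizing placement so that its overlap equals $I_{\max}$, and injectivity from the observation that $\uu = \vv + (\xx,\dots,\xx)$ with $I(\uu)=I(\vv)=I_{\max}$ forces $\xx=0$. You merely spell out a step the paper leaves implicit in the phrase ``by construction,'' namely that $\vv\in N$ together with maximality of $|I_{\max}|$ forces $I(\vv)=I_{\max}$ via the containment-plus-equal-area argument.
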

\begin{proof}
  By construction $M = \pi(N)$. Suppose there exist two distinct $\uu, \vv \in N$ such that
  \[\uu = \vv + (\xx, \xx, \dots, \xx)\]
  for some $\xx \in \mathbb{R}^2$. This implies that $I_{\max} = I(\vv)$ and $I_{\max} = I(\uu) = I(\vv) + \xx$. As a result, we must have $\uu = \vv$.
\end{proof}

Since each $P_i$ and $I_{\max}$ contain at most $n$ vertices, we can represent $(-P_i) - (-I_{\max})$ using $O(n)$ linear constraints without redundancy. This computation can be completed in $O(n)$ time. Consequently, by employing standard linear algebra techniques, we can describe $M \subset \conf$ using $O(n)$ linear constraints without redundancy in $O(n)$ time.

\begin{thm}\label{thm:better second main}
In $O(n)$ time, we can represent $M \subset \conf$ using $O(n)$ linear constraints without redundancy.
\end{thm}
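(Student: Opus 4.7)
The plan is to realize the description $M = \pi(N)$ with $N = \prod_i Q_i$ and $Q_i = (-P_i) - (-I_{\max})$ established in the preceding lemmas, and then transfer the $O(n)$ constraints describing $N$ through the quotient $\pi$ to obtain a description of $M \subset \conf$.

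First, using the given optimal placement $(\vv_0,\dots,\vv_{k-1})$, I would compute $I_{\max} = \bigcap_i(P_i+\vv_i)$ by iteratively applying Shamos's polygon intersection algorithm (\Cref{thm:polygon intersection}). Since $k$ is constant and each pairwise intersection runs in time linear in the combined vertex count, this yields $I_{\max}$ with $O(n)$ vertices in $O(n)$ total time. Next, for each $i$, I would compute the Minkowski difference $Q_i = (-P_i) - (-I_{\max})$ in linear time using the standard algorithm for Minkowski differences of convex polygons. Summed over the constant number of polygons, this produces $O(n)$ halfspace constraints, without redundancy, describing the product $N = \prod_i Q_i \subset (\mathbb{R}^2)^k$ in $O(n)$ time.

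The remaining task is to transfer these constraints to $M$ through the affine isomorphism $\pi|_N$. I would use the invertible linear change of variables $(\vv_0,\dots,\vv_{k-1}) \leftrightarrow (\vv_0,\ww_1,\dots,\ww_{k-1})$ with $\ww_i = \vv_i - \vv_0$, which identifies $(\mathbb{R}^2)^k$ with $\mathbb{R}^2 \times \conf$ and realizes $\pi$ as the projection onto the $\conf$ factor. Under this change of coordinates, each constraint $\vv_i \in Q_i$ becomes a linear constraint in $(\vv_0, \ww_i)$. The key observation is that injectivity of $\pi|_N$ forces $\vv_0$ to be an affine function of $\ww$ on the affine hull of $M$: two independent equalities from the implied system can be solved for $\vv_0 \in \mathbb{R}^2$ in terms of $\ww$, and substituting into the remaining constraints yields an equivalent description of $M$ purely in $\conf$.

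The main obstacle is verifying that no redundancy is introduced by this elimination. This follows from the isomorphism: facets of $N$, each of the form $Q_0 \times \dots \times F \times \dots \times Q_{k-1}$ for a facet $F$ of some $Q_i$, map bijectively to facets of $M$ under $\pi|_N$, so the halfspace constraints transferred from $N$ remain non-redundant in $\conf$; similarly, the equalities defining the affine hull of $N$ map to an independent set of equalities defining the affine hull of $M$ after removing the two used to eliminate $\vv_0$. Since there are $O(n)$ constraints in the ambient dimension $2k-2 = O(1)$, the entire linear algebra step runs in $O(n)$ time.
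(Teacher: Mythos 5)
Your proposal is correct and follows essentially the same route as the paper: compute $I_{\max}$ and the Minkowski differences $Q_i=(-P_i)-(-I_{\max})$, use the affine isomorphism $\pi|_N$ with $N=\prod_i Q_i$, and eliminate the global translation by solving two independent linear equalities so that all constraints are expressed in the quotient coordinates $\vv_i-\vv_0$. The only (inessential) difference is that the paper exhibits those two equalities concretely as coming from two non-parallel edges of $I_{\max}$ lying on edges of some $P_i$ and $P_j$, whereas you infer their existence abstractly from the injectivity of $\pi|_N$; both justifications are valid.
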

\begin{proof}
  Let $\vv_i = (x_i, y_i)$ for each $i < m$. A linear polynomial $f(\vv_0, \dots, \vv_{k-1})$ can be written as an affine combination of $\vv_1 - \vv_0, \dots, \vv_{k-1} - \vv_0$ if and only if
  \[ \sum_{i<m}\frac{\partial}{\partial x_i} f = 0 \quad\text{and}\quad \sum_{i<m}\frac{\partial}{\partial y_i} f = 0.\]
Every edge of $I_{\max}$ should be part of an edge of $P_i$ for some $i < m$. Consider two nonparallel edges. They yield two linear equations:
    \[\ab\cdot\vv_i = c \quad\text{and}\quad \bb\cdot\vv_j - d.\]
    Here, $\vv_i$ and $\vv_j$ are column vectors, and $\ab$ and $\bb$ are row vectors. Let
    \[\vv' = 
    \begin{pmatrix}
        x'\\
        y'
    \end{pmatrix}
    \coloneqq
    {\begin{pmatrix}
        \ab\\
        \bb
    \end{pmatrix}}^{-1}
    \begin{pmatrix}
        \ab\cdot\vv_i - c\\
        \bb\cdot\vv_j - d
    \end{pmatrix}.\]
    Then 
    \[ \sum_{i<m}\frac{\partial}{\partial x_i} \vv' = 
    \begin{pmatrix}
        1\\
        0
    \end{pmatrix} 
    \quad\text{and}\quad 
    \sum_{i<m}\frac{\partial}{\partial y_i} \vv' = 
    \begin{pmatrix}
        0\\
        1
    \end{pmatrix},\]
    so we replace every $\vv_i$ by $\vv_i - \vv'$ in the linear constraints. As a result, each constraint is expressed in terms of $\vv_1 - \vv_0, \dots, \vv_{k-1} - \vv_0$.
    \end{proof} 

\Cref{thm:second main} is an immediate corollary of \Cref{thm:better second main}.

\appendix
\section{Partitioning with Two Lines}

In this section, we prove \Cref{thm:prune-and-search}. While the main theorems can be derived solely from \cite[Theorem 1.4]{firstPaper}, this approach is somewhat unsatisfactory. Specifically, it requires three queries at every step and prunes only $1/18$ of the lines, leading to a slowdown factor of $27/8$. Moreover, the statement of \cite[Theorem 1.4]{firstPaper} is much more difficult to describe.

To provide a more convenient (at least in the authors' taste) proof, we instead prove the dual statement. This is the problem of partitioning a set of points in the plane with two lines such that each quadrant contains at least $1/8$ of the points. We begin by presenting Megiddo's linear time algorithm for a special case of the ham sandwich problem \cite[Section 2]{megiddo1985}.

\begin{thm}\label{thm:ham sandwich}
  Given two finite sets of points in the plane with a total of $n$ points, and with disjoint convex hulls, we can compute a line that bisects both sets in $O(n)$ time.
\end{thm}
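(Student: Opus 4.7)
The plan is to follow Megiddo's linear-time algorithm for this separated ham-sandwich problem, as presented in \cite{megiddo1985}. Since the convex hulls of $A$ and $B$ are disjoint, I first apply two-dimensional linear programming (\Cref{thm:linear prog}) in $O(n)$ time to find a strictly separating line, and then rotate coordinates so that this separator is the $y$-axis, with $A$ in the half-plane $x < 0$ and $B$ in $x > 0$.

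Next, I pass to the point-line duality that sends $(p,q)$ to the line $y = p x - q$ and a primal line $y = m x - c$ to the dual point $(m,c)$. A primal line bisects a point set if and only if its dual point lies on the median level of the corresponding dual arrangement. Under the sign hypothesis, every line of the dual arrangement $A^*$ has negative slope and every line of $B^*$ has positive slope, so the median level $\lambda_A$ of $A^*$ is piecewise linear and strictly decreasing in $x$, while the median level $\lambda_B$ of $B^*$ is piecewise linear and strictly increasing. Since $\lambda_A$ tends to $+\infty$ and $\lambda_B$ tends to $-\infty$ as $x \to -\infty$, and the roles reverse as $x \to +\infty$, the two curves meet at a unique point $p^\ast$, whose primal image is the required bisecting line.

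To locate $p^\ast$ in linear time I iterate Megiddo's prune-and-search. In each round I pair up the surviving lines of $A^*$ arbitrarily, compute the $x$-coordinate of each pair's intersection, and select the median $x^\ast$ of those coordinates in time linear in the current size via deterministic selection. A decision oracle then evaluates $\lambda_A(x^\ast)$ and $\lambda_B(x^\ast)$, each by one further median selection on $y$-values of the surviving arrangement, and, using the monotonicity of the two levels, declares whether $p^\ast$ lies to the left or right of $x^\ast$. Maintaining the invariant that $p^\ast$ is confined to a vertical strip that shrinks each round, for a constant fraction of the resolved pairs in $A^*$ one of the two lines can be certified to remain strictly below (respectively strictly above) $\lambda_A$ throughout the surviving strip and is pruned; the same procedure is applied to $B^*$. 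Each round is linear in the current number of lines and deletes a constant fraction, so the running times form a geometric series that sums to $O(n)$.

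The main obstacle is justifying the pruning rule: given only the monotonicity of $\lambda_A$ and $\lambda_B$ and the current strip, one must precisely identify which of the two lines in each resolved pair is dominated and can therefore be removed without disturbing the median level on the part of the strip that still matters. This geometric bookkeeping, together with the careful invariants that keep each decision query one-dimensional, is the technical heart of Megiddo's original argument, which I would import verbatim for the formal verification.
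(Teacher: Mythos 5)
The paper does not actually prove this theorem --- it is imported verbatim from Megiddo \cite[Section 2]{megiddo1985} --- and your proposal is an accurate outline of exactly that cited algorithm (separate the two sets, dualize so that the two median levels are monotone in opposite directions, and locate their unique crossing by prune-and-search), deferring the same technical pruning lemma to the same source. So you are taking essentially the same route as the paper, just with more of Megiddo's internals spelled out.
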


The following corollary is a slightly stronger result than Megiddo's original main theorem \cite{megiddo1985}.

\begin{cor}\label{thm:point prune}
Given a set of $n$ points in a projective plane $\mathbb{P}^2$, we can compute a horizontal line $\ell_0$ and a non-horizontal line $\ell_1$ in $O(n)$ time, such that each closed quadrant defined by the two lines contains at least $\lfloor n/4 \rfloor$ points in $O(n)$ time.
\end{cor}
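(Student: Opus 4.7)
The plan is to reduce the statement to Megiddo's ham sandwich algorithm (Theorem~\ref{thm:ham sandwich}). I will first pick $\ell_0$ so that the two halves of the input set $S$ it separates have disjoint convex hulls and roughly equal size, and then take $\ell_1$ to be the ham sandwich cut for this pair. The quadrant bound will follow essentially for free from the bisection property.

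In detail, I would begin by running linear-time median selection on the $y$-coordinates of the points of $S$ (after a generic rotation, or equivalently an infinitesimal perturbation, to eliminate ties) and set $\ell_0$ to be the resulting horizontal line. This produces a partition $S = A \sqcup B$ with $A$ strictly above $\ell_0$, $B$ strictly below $\ell_0$, and $|A|, |B| \geq \lfloor n/2 \rfloor$. Because $\ell_0$ separates them, the convex hulls of $A$ and $B$ are disjoint, so Theorem~\ref{thm:ham sandwich} applies in $O(n)$ time and returns a line $\ell_1$ bisecting both sets; that is, each closed half-plane of $\ell_1$ contains at least $\lceil |A|/2 \rceil$ points of $A$ and at least $\lceil |B|/2 \rceil$ points of $B$. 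Since $\lceil \lfloor n/2 \rfloor / 2 \rceil \geq \lfloor n/4 \rfloor$, each of the four closed quadrants carved out by $\ell_0$ and $\ell_1$ contains at least $\lfloor n/4 \rfloor$ points of $S$ — the upper two drawing their count from $A$ and the lower two from $B$.

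The one nontrivial step to check is that $\ell_1$ is non-horizontal, as the statement requires. If $\ell_1$ were horizontal, it would be parallel or equal to $\ell_0$, and so at least one of $A$ or $B$ would lie entirely in one open half-plane of $\ell_1$, contradicting the bisection property (both $A$ and $B$ are nonempty by the size bound). The remaining point of care is the projective formulation: by a generic choice of coordinate axes, no input point lies on the line at infinity of the chosen affine chart, so the argument above applies verbatim in $\mathbb{P}^2$. The main obstacle is really just bookkeeping around degeneracies (ties at the median, points at infinity), which a standard symbolic perturbation resolves; the core idea is that separating $A$ from $B$ by a horizontal line forces the subsequent ham sandwich cut to cross $\ell_0$ transversely.
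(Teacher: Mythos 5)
Your overall strategy is the same as the paper's: take $\ell_0$ through the median $y$-coordinate, split the points into an upper set $A$ and a lower set $B$ with disjoint convex hulls, invoke \Cref{thm:ham sandwich} to get $\ell_1$, and argue $\ell_1$ cannot be horizontal. The gap is in how you dispose of the degenerate cases, and it is a genuine one because those cases are exactly where the content of the proof lies. Your partition ``$A$ strictly above, $B$ strictly below, $|A|,|B|\geq\lfloor n/2\rfloor$'' fails as soon as any point lies on $\ell_0$ --- and the median horizontal line generally does pass through input points; in the extreme, all points could share one $y$-coordinate, making $A=B=\emptyset$. (Even a mild instance breaks the count: for $n=8$ with one point above $\ell_0$, six on it, and one below, your scheme can leave a closed quadrant with a single point, below $\lfloor n/4\rfloor=2$.) The fix you propose --- a generic rotation, or equivalently a generic choice of coordinate axes for the points at infinity --- is inadmissible: the statement requires $\ell_0$ to be \emph{horizontal} in the given coordinates, and rotating the frame changes which lines are horizontal. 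This requirement is not cosmetic; in the dual application (\Cref{thm:prune-and-search}) the groups of parallel event lines dualize to groups of points on common horizontal lines, so massive $y$-coordinate ties are the typical input, not a measure-zero accident. A vertical perturbation of the points does not obviously rescue the argument either, since the strict above/below partition you rely on is then a statement about the perturbed points, and transferring the closed-quadrant guarantee back to the original points and the unperturbed $\ell_0$ is precisely the step that needs an argument.

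The paper closes this gap with two devices you would need to add. First, if $\ell_0$ contains at least half of the points, it takes $\ell_1$ to be any non-horizontal line through the median point \emph{of the points on} $\ell_0$, and is done. Otherwise, it assigns the points lying on $\ell_0$ to $A$ from left to right until $A$ reaches half the points, giving $B$ the rest; this keeps $\operatorname{conv}(A)\cap\ell_0$ and $\operatorname{conv}(B)\cap\ell_0$ disjoint segments, so \Cref{thm:ham sandwich} still applies, and the ``less than half on $\ell_0$'' hypothesis is what forces $\ell_1$ to be non-horizontal. For the projective issue, the paper does not change axes but applies the specific shear $(a;b;c)\mapsto(a;b;c+\varepsilon b)$, which moves every point off the line at infinity except $(1;0;0)$, and that one point lies in all four closed quadrants and can be discarded. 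Your non-horizontality argument and the arithmetic $\lceil\lfloor n/2\rfloor/2\rceil\geq\lfloor n/4\rfloor$ are fine once the partition is repaired.
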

\begin{proof}
First, we can assume that there are no points on the line at infinity by applying the perturbation $(a; b; c) \mapsto (a; b; c + \varepsilon b)$. An appropriate value for $\varepsilon$ can be computed in $O(n)$ time. Additionally, we can disregard a single point at $(1; 0; 0)$, as it is contained in all closed quadrants. 

Next, we identify the horizontal line that passes through the median $y$-coordinate of the points, denoted as $\ell_0$. If $\ell_0$ contains at least half of the points, we can select any non-horizontal line $\ell_1$ that passes through the median point $m$ of $\ell_0$. As a result, we assume that $\ell_0$ contains fewer than half of the points.

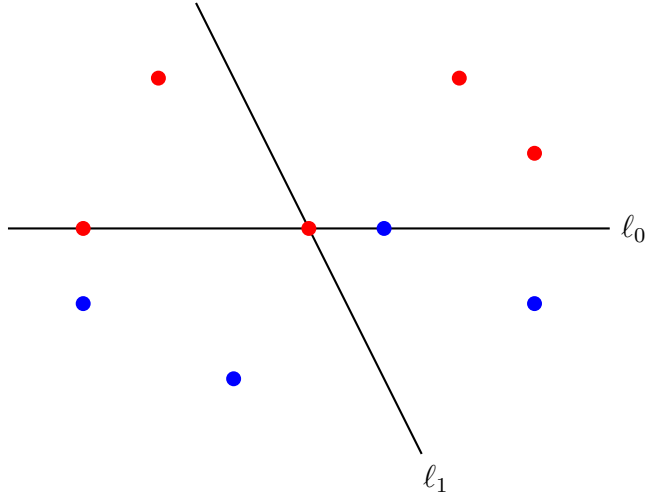
\begin{figure}[ht]
  \centering 
  \begin{tikzpicture}[thick,line join=round]
    \draw (-4  ,0) -- (4  , 0) node[right]     {$\ell_0 $};
    \draw (-1.5,3) -- (1.5,-3) node[anchor=120]{$\ell_1 $};

    \draw (-2, 2) node[circle,fill,inner sep=2pt,red]{};
    \draw ( 2, 2) node[circle,fill,inner sep=2pt,red]{};
    \draw ( 3, 1) node[circle,fill,inner sep=2pt,red]{};
    \draw (-3, 0) node[circle,fill,inner sep=2pt,red]{};
    \draw ( 0, 0) node[circle,fill,inner sep=2pt,red]{};

    \draw ( 1, 0) node[circle,fill,inner sep=2pt,blue]{};
    \draw (-3,-1) node[circle,fill,inner sep=2pt,blue]{};
    \draw ( 3,-1) node[circle,fill,inner sep=2pt,blue]{};
    \draw (-1,-2) node[circle,fill,inner sep=2pt,blue]{};
  \end{tikzpicture}
  \caption{The red represents $A$ and the blue represents $B$}\label{fig:ham sandwich}
\end{figure}

We put the points above the line $\ell_0$ in a set $A$. Moreover, we also put points on $\ell_0$ from left until $A$ has at least half of the points. Then $B$ is the set of remaining points. Since the convex hulls of $A$ and $B$ are disjoint, we can apply \Cref{thm:ham sandwich} to compute the line $\ell_1$ that simultaneously bisects both sets. Since $\ell_0$ contains less than half of the points, $\ell_1$ should not be horizontal. This divides the plane into four closed quadrants, each containing at least $\lfloor n/4 \rfloor$ points.
\end{proof}

An intersecting aspect is that \Cref{thm:point prune} offers a linear-time algorithm for its own weighted version. It is important to note that this approach heavily relies on the following well-established result.
\begin{lem}
    Given $n$ distinct real numbers with positive weights, we can determine the weighted median of these numbers in $O(n)$ time.
\end{lem}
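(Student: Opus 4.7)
The plan is to recurse by halving, using the linear-time unweighted median as a subroutine. Let $W = \sum_i w_i$; because the values are distinct, the weighted median is the unique $x_k$ for which the weight strictly below it is at most $W/2$ and the weight strictly above it is at most $W/2$.

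First I would compute the unweighted median $m$ of the $n$ values in $O(n)$ time using the Blum--Floyd--Pratt--Rivest--Tarjan linear-time selection algorithm, then partition the pairs around $m$ in $O(n)$ time and record $W_L$, the total weight of elements strictly less than $m$, and $W_R$, the total weight of elements strictly greater than $m$. If both $W_L \leq W/2$ and $W_R \leq W/2$ hold, then $m$ itself is the weighted median and we return it. Otherwise exactly one side is too heavy; in the case $W_L > W/2$, the weighted median must lie strictly below $m$, so I would discard all elements that are $\geq m$ and replace them by a single phantom element of weight $W - W_L$ placed just above all surviving values. The case $W_R > W/2$ is handled symmetrically.

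Each recursive call reduces the count of real elements by at least half while doing $O(n)$ work, yielding the recurrence $T(n) = T(n/2) + O(n) = O(n)$. The only point that needs a brief verification is that the phantom-element device preserves the weighted median of the reduced instance: this is immediate from the definition, since the total weight is unchanged and the phantom sits on the side from which no real element could be the answer. The main obstacle is therefore just the bookkeeping rather than any algorithmic novelty; all of the real content is packaged into the existence of linear-time unweighted median selection.
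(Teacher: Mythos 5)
Your argument is correct and is the standard linear-time weighted-median algorithm: select the unweighted median by BFPRT, weigh the two sides, and recurse on the heavy side after collapsing the discarded side into a single phantom of the appropriate weight, giving $T(n)=T(n/2+1)+O(n)=O(n)$. The paper offers no proof of this lemma at all --- it is invoked as a ``well-established result'' --- so there is nothing to compare against; your write-up supplies exactly the missing justification. One cosmetic point: the weighted median satisfying ``weight strictly below $\leq W/2$ and weight strictly above $\leq W/2$'' need not be unique (e.g.\ two points of equal weight both qualify), so you should either drop the word ``unique'' or fix a convention such as the lower weighted median; this does not affect the correctness of the algorithm, which returns a valid weighted median either way, and the phantom can never be returned since the real weight on its far side already exceeds $W/2$.
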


\begin{thm}\label{thm:point weighted prune}
Given $n$ weighted points in a projective plane $\mathbb{P}^2$ with positive weights $\lambda_0,\dots,\lambda_{n-1}$, we can compute a horizontal line $\ell_0$ and a non-horizontal line $\ell_1$ in $O(n)$ time such that each closed quadrant defined by the two lines contains at least $1/4$ of the total weight.
\end{thm}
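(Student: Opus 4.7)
My plan is to closely follow the proof of \Cref{thm:point prune}, substituting weighted medians (using the lemma just stated) for ordinary medians throughout, and a weighted analogue of \Cref{thm:ham sandwich} for Megiddo's ham-sandwich subroutine. Write $W = \sum_i \lambda_i$ for the total weight.

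First I would apply the same projective perturbation $(a;b;c) \mapsto (a;b;c+\varepsilon b)$ used in \Cref{thm:point prune} to push every point off the line at infinity, and then ignore the point $(1;0;0)$ since it belongs to every closed quadrant. Using the weighted-median primitive, I would compute $y_0$, the weighted median of the $y$-coordinates, and let $\ell_0$ be the horizontal line $y = y_0$. After a small symbolic perturbation making all $y$-coordinates distinct, $\ell_0$ passes through a single input point $p_m$ of weight $w_m$. Writing $W_\uparrow$ and $W_\downarrow$ for the total weight strictly above and strictly below $\ell_0$, the weighted-median property forces $W_\uparrow, W_\downarrow \le W/2$, hence $W_\uparrow, W_\downarrow \ge W/2 - w_m$.

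If $w_m \ge W/2$ I would take $\ell_1$ to be any non-horizontal line through $p_m$; then $p_m$ belongs to all four closed quadrants, each of which has weight at least $w_m \ge W/4$. Otherwise, I would introduce two infinitesimal copies $p_m^{(A)} = (x_m, y_0 + \delta)$ and $p_m^{(B)} = (x_m, y_0 - \delta)$ of $p_m$, each carrying weight $w_m$, and define $A$ to consist of the strictly-above points together with $p_m^{(A)}$, and $B$ to consist of the strictly-below points together with $p_m^{(B)}$. The convex hulls of $A$ and $B$ are separated by the horizontal strip $|y - y_0| < \delta$, so I can invoke the weighted analogue of \Cref{thm:ham sandwich}, obtained from Megiddo's algorithm by replacing every ordinary-median computation with a weighted-median computation, to find a line $\ell_1$ bisecting both $A$ and $B$ in weight in $O(n)$ time.

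A short case analysis on which side of $\ell_1$ the original $p_m$ lies then lets me translate the bisection identities $L_A = R_A = \tfrac{1}{2}(W_\uparrow + w_m)$ and $L_B = R_B = \tfrac{1}{2}(W_\downarrow + w_m)$ back to the unperturbed picture, at which point each of the four closed quadrants is seen to have weight at least $\tfrac{1}{2}(W_\uparrow + w_m)$ or $\tfrac{1}{2}(W_\downarrow + w_m)$, both of which are at least $W/4$ by the bounds $W_\uparrow, W_\downarrow \ge W/2 - w_m$. The main obstacle is the weighted lift of Megiddo's ham-sandwich routine; this is essentially routine because his prune-and-search interacts with the data only through median and comparison queries, so replacing every median call by a weighted-median call preserves both correctness and the $O(n)$ time bound.
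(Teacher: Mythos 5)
Your reduction to a ham-sandwich cut of two separated weighted sets follows the same skeleton as the paper's proof (perturb off the line at infinity, take the weighted median horizontal line $\ell_0$, handle the degenerate case where $\ell_0$ carries at least half the weight, split the median point's weight between an upper set $A$ and a lower set $B$). But the step you defer as ``essentially routine'' --- lifting Megiddo's linear-time ham-sandwich algorithm to weighted points by replacing median queries with weighted-median queries --- is precisely the crux, and the claim as stated does not hold. Megiddo's prune-and-search does not interact with the data only through median and comparison queries: its pruning step pairs up dual lines, locates the pair intersections relative to the current search interval, and then \emph{discards whole lines} in a balanced fashion (one from above the sought level matched against one from below), relying on the fact that every line contributes equally to the rank being tracked. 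With arbitrary positive weights, discarding one point of weight $\lambda$ cannot be compensated by discarding another point of weight $\lambda'\neq\lambda$, and splitting weights to rebalance destroys the geometric decrease in the number of points that gives the $O(n)$ bound. So the weighted ham-sandwich subroutine needs an actual argument, not a syntactic substitution.

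The paper closes exactly this gap with a different mechanism: it uses the \emph{unweighted} routine (\Cref{thm:ham sandwich}, via \Cref{thm:point prune}) only to produce two auxiliary lines, locates their intersection $v_0$ (and a point $v_1$ at infinity), determines by weighted bisection through $v_0$ and $v_1$ a closed quadrant that no weighted ham-sandwich cut of $A$ and $B$ can cross, and then \emph{merges} all points of $A$ (resp.\ $B$) inside that quadrant into a single aggregate weighted point --- legitimate because any remaining candidate cut leaves the whole quadrant on one side. Each round keeps about $3/4$ of the points plus $O(1)$ new ones, giving a linear-time recursion that terminates with a brute-force search on $O(1)$ points. If you want to keep your outline, you must either supply a proof of a weighted separated ham-sandwich algorithm in $O(n)$ time or adopt a quadrant-exclusion-and-merge argument of this kind; as written, the proposal has a genuine gap. (The double counting of $w_m$ in both $A$ and $B$, and the unresolved ``short case analysis'' for which side of $\ell_1$ the point $p_m$ lands on, are smaller bookkeeping issues by comparison, but they too should be fixed, e.g.\ by splitting $p_m$'s weight between $A$ and $B$ as the paper does.)
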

\begin{proof}
  Once again, we can assume that there are no points on the line at infinity by applying perturbation $(a;b;c) \mapsto (a; b; c+\varepsilon b)$ and ignoring a single point at $(1, 0, 0)$. Let $\ell_0$ be the weighted median horizontal line. If $\ell_0$ contains at least half of the total weight, then we can choose any non-horizontal line $\ell_1$ passing through the weighted median point $m$ of $\ell_0$. Therefore, we assume that $\ell_0$ contains less than half of the total weight.

We start by putting all points above the line $\ell_0$ into a set $A$, and adding points on $\ell_0$ from left to right until $A$ has at least half of the total weight. We modify the weight of the last point $p$ so that the total weight of $A$ is exactly half of the total weight, and set $B$ as the remaining points and $p$ with the remaining weight.

\begin{figure}[ht]
  \centering 
  \begin{tikzpicture}[thick,line join=round]
    \draw[dotted] (-4, 0) -- ( 4  , 0) node[right]     {$\ell'_0$};
    \draw         (-4,-2) -- ( 4  ,-2) node[right]     {$\ell_0 $};
    \draw[dotted] ( 1, 2) -- (-2,-4) node[anchor=75] {$\ell'_1$};
    \draw         (-1, 2) -- ( 2,-4) node[anchor=120]{$\ell_1 $};
    
    \draw (-3, 1) node[circle,fill,inner sep=2pt,red]{};
    \draw (-2, 1) node[circle,fill,inner sep=2pt,red]{};
    \draw ( 0, 1) node[circle,fill,inner sep=2pt,red]{};
    \draw ( 1, 1) node[circle,fill,inner sep=2pt,red]{};
    \draw ( 2, 1) node[circle,fill,inner sep=2pt,red]{};
    \draw ( 3, 1) node[circle,fill,inner sep=2pt,red]{};
    \draw (-3, 0) node[circle,fill,inner sep=2pt,red]{};
    \draw ( 2, 0) node[circle,fill,inner sep=2pt,red]{};
    \draw (-2,-1) node[circle,fill,inner sep=2pt,red]{};
    \draw (-2,-2) node[circle,fill,inner sep=2pt,red]{};
    \draw ( 0,-2) node[circle,fill,inner sep=2pt,red]{};

    \begin{scope}
    \clip(0,-2) rectangle (2,-1);
    \draw (1,-2) node[circle,fill,inner sep=2pt,red]{};
    \end{scope}
    \begin{scope}
    \clip(0,-2) rectangle (2,-3);
    \draw (1,-2) node[circle,fill,inner sep=2pt,blue]{};
    \end{scope}

    \draw ( 2,-2) node[circle,fill,inner sep=2pt,blue]{};
    \draw (-3,-3) node[circle,fill,inner sep=2pt,blue]{};
    \draw (-2,-3) node[circle,fill,inner sep=2pt,blue]{};
    \draw ( 3,-3) node[circle,fill,inner sep=2pt,blue]{};

    \node[anchor=-30] at(0,0){$v_0$};
    \node[anchor=250] at(1,-2){$p$};
    \node             at(2,2.25){$v_1$ at infinity};
    
  \end{tikzpicture}
  \caption{The red represents $A$ and the blue represents $B$}\label{fig:point group ham sandwich}
\end{figure}
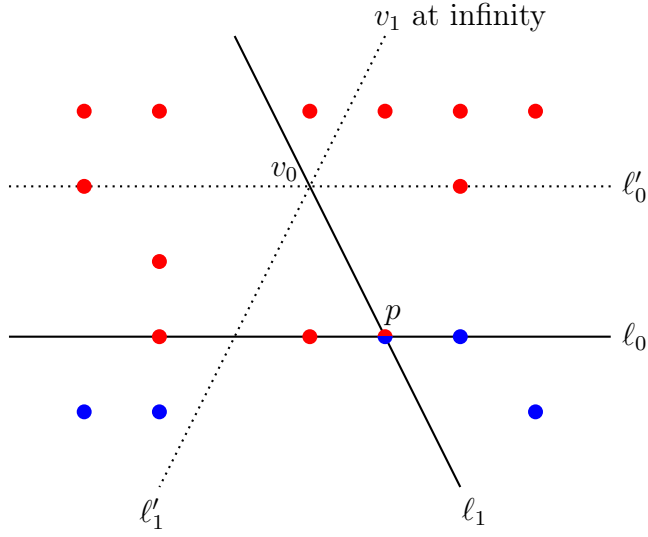

Since $\ell_0$ contains less than half of the total weight, any ham sandwich cut of $A$ and $B$ must not be horizontal. We can then find two lines $\ell_0'$ and $\ell_1'$ as in \Cref{thm:ham sandwich}. Let $v_0$ be their intersection, and let $v_1$ be the intersection of $\ell_1'$ and the line at infinity.

Without loss of generality, we may assume that the $y$-coordinate of $\ell_0$ is at most that of $\ell'_0$. We then take a line $\ell_1$ passing through $v_0$ and bisecting the weight of $B$. If $\ell_1$ also bisects the weight of $A$, then this is the desired line. Otherwise, we may assume without loss of generality that the left side of $\ell_1$ contains more weight. Then any ham sandwich cut of $A$ and $B$ must pass through the left side of $\ell_0'$ with respect to $v_0$.

We can repeat this process with $v_1$. Then we determine which side of the line at infinity a ham sandwich cut of $A$ and $B$ must pass through with respect to $v_1$. After this, we identify one quadrant that does not intersect any ham sandwich cut of $A$ and $B$. Thus, we can merge the points in that quadrant into two points, one for $A$ and one for $B$, and repeat the entire process.

  Every step, the number of points become $3/4$ and we get at most $3$ new points. Thus, in $O(n)$ time, at most 12 points remains. Then we can get a ham sandwich cut of $A$ and $B$ by brute force. The ham sandwich theorem implies that such a cut exists.
\end{proof}

\begin{thm}\label{thm:point group prune}
Let $A$ be an array of arrays $A_0,\dots,A_{n-1}$ of points. Suppose that for each $i<m$, points on $A_i$ lie on the same horizontal line and are sorted from left to right. Then, in $O(n)$ time, we can find $\ell_0$ and $\ell_1$ such that for each $i<4$, we can obtain a pruned array $P_i$ of $A$ with $|P_i| \geq |A|/8$ and $P_i$ contained in the $i$th quadrant.
\end{thm}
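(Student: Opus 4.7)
The plan is to reduce \Cref{thm:point group prune} to the weighted version \Cref{thm:point weighted prune}. For each nonempty $A_i$ I take the median element $p_i \in A_i$ (say the $\lfloor |A_i|/2 \rfloor$-th point of the sorted array) and assign it weight $\lambda_i = |A_i|$. Since each $A_i$ supports $O(1)$ random access, the weighted point set $\{(p_i,\lambda_i)\}$ can be constructed in $O(n)$ time. Applying \Cref{thm:point weighted prune} to these at most $n$ weighted points then yields, in $O(n)$ time, a horizontal line $\ell_0$ and a non-horizontal line $\ell_1$ such that each of the four closed quadrants they define contains representatives of total weight at least $|A|/4$.

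For each closed quadrant $Q$, I build $P_Q$ by including, for every $j$ with $p_j \in Q$, the contiguous half of $A_j$ consisting of $p_j$ together with all points of $A_j$ lying on the same horizontal side of $p_j$ as $Q$. Each such half has size at least $\lambda_j/2 = |A_j|/2$, so
\[ |P_Q| \;\geq\; \tfrac{1}{2}\sum_{p_j \in Q}\lambda_j \;\geq\; \tfrac{1}{2}\cdot\tfrac{|A|}{4} \;=\; \tfrac{|A|}{8}. \]
Because each entry of $P_Q$ is a contiguous subarray of some $A_j$, represented by its endpoint indices, this identification takes $O(n)$ additional time.

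The correctness hinges on showing that each chosen half actually lies in $Q$. Vertically, $p_j \in Q$ forces the common horizontal line of $A_j$ onto the correct closed side of $\ell_0$, so every point of $A_j$ lies on that side. Horizontally, because $A_j$ is sorted left-to-right and $p_j$ is on the correct closed side of $\ell_1$, the indices of $A_j$ on $p_j$'s side of $p_j$ form a contiguous prefix or suffix that is entirely on the correct side of $\ell_1$. I expect the main subtlety to be the boundary cases where $p_j$ lies exactly on $\ell_0$ or $\ell_1$: then the same subarray of $A_j$ may legitimately be assigned to two adjacent quadrants, but this is permitted because \Cref{thm:point weighted prune} is stated for closed quadrants and a boundary point is counted in every quadrant it touches, so the $|A|/8$ lower bound is not compromised.
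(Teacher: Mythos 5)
Your proposal is correct and follows essentially the same route as the paper's proof, which likewise takes the median of each $A_i$ weighted by $|A_i|$ and invokes \Cref{thm:point weighted prune}. You additionally spell out the construction of the pruned halves and the closed-quadrant boundary cases, which the paper leaves implicit.
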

\begin{proof}
    We can simply choose median points of each of $A_i$, and let the weight be the size of $A_i$. Then we can apply \Cref{thm:point weighted prune} and get the answer.
\end{proof}

Let $(\mathbb{P}^2)^\dual$ be the dual projective space, the space parametrizing lines on $\mathbb{P}^2$. Consider the map
\begin{align*}
(\mathbb{P}^2)^\dual &\rightarrow \mathbb{P}^2\\
ax+by+cz=0 &\mapsto (c;b;a).
\end{align*}
Then \Cref{thm:prune-and-search} is exactly the dual theorem of \Cref{thm:point group prune} under this map. In fact, we can do a little better if extra time is allowed.

\begin{lem}\label{lem:multiple arrays ranking}
  Let $S$ be a collection of $m$ sorted arrays. Given $x$, we can compute the rank of $x$ in $O(n\log|S|)$ time using binary search on each array.
\end{lem}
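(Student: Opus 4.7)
The plan is to decompose the ranking problem across the arrays and then invoke one binary search inside each. If $r_i$ denotes the number of entries of $A_i$ that strictly precede $x$ under any fixed tie-breaking rule (for instance, breaking equalities by the array index), then the rank of $x$ in the multiset $\bigcup_{A_i \in S} A_i$ equals $\sum_i r_i$ by straightforward counting.

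For each sorted array $A_i$, a standard binary search on $A_i$ produces $r_i$ in $O(\log|A_i|)$ time, since $A_i$ is given with constant-time random access. Summing these per-array counts over $S$ costs one additional linear pass in $|S|$ and returns the desired rank.

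For the running time, each individual binary search costs at most $O(\log n)$ once $n$ is taken as an upper bound on any single array's size, so the total work over the $|S|$ arrays fits within the claimed bound. There is no real obstacle beyond bookkeeping: one only has to fix a consistent tie-breaking convention across arrays at the outset so that the per-array counts genuinely sum to the overall rank, and the rest of the argument reduces to independently invoking the most elementary of search primitives on each input array.
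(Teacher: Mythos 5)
Your proposal is correct and follows the same route as the paper: the paper's proof is simply ``apply binary search on each array and sum,'' and your write-up is that argument with the (reasonable) extra care about tie-breaking and about how the per-array counts add up to the global rank. Nothing further is needed.
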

\begin{proof}
    We can apply binear search on each array and get the answer.
\end{proof}

\begin{lem}\label{lem:multiple arrays median}
    Let $S$ be a collection of $m$ sorted arrays. Then we can find the $i$th element of $S$ in $O(n\log^2|S|)$ time.
\end{lem}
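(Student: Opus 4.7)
My plan is to binary-search for the value of the $i$-th element, using Lemma~\ref{lem:multiple arrays ranking} as a rank oracle. For each sorted array $A_j$ I maintain an active subinterval $[\ell_j, r_j]$ (initialized to the full array) and an adjusted target rank $i^*$ (initialized to $i$), ensuring that the sought element always lies among the currently active elements and has rank $i^*$ within them. In each iteration I first compute the median $M_j$ of each active interval in $O(|S|)$ time, then find the weighted median $M$ of $\{M_j\}$ with weights $\{r_j - \ell_j + 1\}$ in $O(|S|)$ time. I invoke Lemma~\ref{lem:multiple arrays ranking} to compute the rank of $M$ among the active elements; if this rank equals $i^*$, I return $M$, otherwise I discard all active elements on the appropriate side of $M$, updating each $[\ell_j, r_j]$ by one binary search for $M$ inside $A_j$ and adjusting $i^*$ accordingly.

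By the defining property of the weighted median, at least half of the total active weight $W = \sum_j(r_j - \ell_j + 1)$ is contributed by arrays whose median $M_j$ lies on the discarded side, and for each such array at least half of its active portion is discarded. Hence $W$ shrinks by a factor of at least $3/4$ per iteration, so the process terminates after $O(\log n)$ iterations with a single active element.

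For the running time, each iteration costs $O(W \log |S|)$ for the rank query (Lemma~\ref{lem:multiple arrays ranking} restricted to the current active elements, treating each active subinterval as a sorted array of size at most $W$) plus $O(|S| \log n)$ for the $|S|$ binary searches used to shrink the intervals. Since $W$ decays geometrically, the rank-query costs telescope to $O(n \log |S|)$, while the shrinkage costs sum to $O(|S| \log^2 n)$. The main step to confirm is that both terms fit into the claimed $O(n \log^2 |S|)$ bound: the first does so immediately, and the second follows from the fact that $x / \log^2 x$ is eventually increasing, giving $|S| \log^2 n = O(n \log^2 |S|)$ whenever $|S| \leq n$, which is the main obstacle to check carefully.
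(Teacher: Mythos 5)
Your proposal is correct and follows essentially the same route as the paper's proof: take the weighted median of the per-array medians as a pivot, compute its rank by binary search in each array, discard at least a quarter of the remaining elements, and recurse for $O(\log|S|)$ rounds, giving the claimed bound. The only differences are cosmetic — you spell out the active-interval bookkeeping and the $1/2\times 1/2$ discarding argument that the paper leaves implicit, and you swap the roles of $n$ and $|S|$ relative to the paper's (admittedly inconsistent) notation, but the accounting still lands on $O(n\log^2|S|)$.
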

\begin{proof}
Let $m$ be the weighted median of medians of each array, where the weight is given by the size of each array. By applying binary search on each array, we can compute the rank of $m$ in $O(n\log|S|)$ time. From the medians, we can then discard $1/4$ of the elements of $S$ and recursively repeat the process. Since there are $O(\log|S|)$ levels of recursion, the overall time complexity is $O(n\log^2|S|)$.
\end{proof}

\begin{thm}\label{thm:exact point group prune}
Let $A$ be an array of arrays $A_0,\dots,A_{n-1}$ of points. Suppose that for each $i<m$, points on $A_i$ lie on the same horizontal line and are sorted from left to right. Then, in $O(n\log^2|S|)$ time, we can find $\ell_0$ and $\ell_1$ such that for each $i<4$, we can obtain a pruned array $P_i$ of $A$ with $|P_i| \geq |A|/4$ and $P_i$ contained in the $i$th quadrant.
\end{thm}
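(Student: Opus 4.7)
The plan is to upgrade \Cref{thm:point group prune} from $|A|/8$ to the optimal $|A|/4$ by replacing the weighted-median-of-medians heuristic (which underlies \Cref{thm:point weighted prune}) with a genuine ham-sandwich cut. Because each $A_i$ lies on a single horizontal line and is sorted left-to-right, rank and order-statistic queries on the entire point set can be answered through \Cref{lem:multiple arrays ranking} and \Cref{lem:multiple arrays median} without ever enumerating the $|A|$ individual points.

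I would first choose $\ell_0$ to be the horizontal line through the $y$-median of all points. Since every $A_i$ is supported on the horizontal line $y=y_i$, this is just the weighted median of $y_0,\dots,y_{n-1}$ with weights $|A_0|,\dots,|A_{n-1}|$ and costs $O(n)$ time. Breaking ties by splitting the (at most one) array lying on $\ell_0$ at an appropriate prefix, I obtain sets $U$ above $\ell_0$ and $D$ below $\ell_0$ of sizes exactly $\lfloor |A|/2\rfloor$ and $\lceil |A|/2\rceil$, each still presented as a family of sorted horizontal arrays. Since $U$ and $D$ are separated by $\ell_0$, their convex hulls are disjoint, so \Cref{thm:ham sandwich} yields a (necessarily non-horizontal) line $\ell_1$ that simultaneously bisects $U$ and $D$. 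Each of the four closed quadrants cut out by $\ell_0,\ell_1$ then contains at least $|A|/4$ points, which is optimal up to rounding.

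For the running time, I would not apply Megiddo's ham-sandwich procedure to the enumerated points, but rather adapt its prune-and-search skeleton so that every geometric primitive is answered through the sorted-array subroutines. The two primitives needed at each prune step are (i) ``count, for a candidate line $\ell$, the number of points of $U$ and $D$ on each side,'' which for a non-horizontal $\ell$ reduces to one binary search per array for the crossing of $\ell$ with each $y=y_i$, costing $O(n\log|A|)$ via \Cref{lem:multiple arrays ranking}, and (ii) ``produce the median line of a prescribed slope,'' which reduces to finding an order statistic among $n$ sorted arrays of the values $y_i-\theta x_{ij}$ and is handled by \Cref{lem:multiple arrays median}. Combining these primitives with Megiddo's $O(\log n)$ prune-and-search levels yields the claimed time. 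Once $\ell_1$ is in hand, one final binary search per array identifies the crossing of $\ell_1$ with $y=y_i$, which together with the known side of $\ell_0$ assigns each $A_i$ to at most two quadrants as a pair of sorted subarrays; these are the pruned arrays $P_i$.

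The main obstacle is the fine accounting that keeps Megiddo's procedure within the claimed $\log^2$ factor rather than $\log^3$: one must check that the rank-query primitive alone suffices to decide which side of each candidate vertex the bisecting line lies on, and that the median-on-sorted-arrays subroutine need only be invoked a constant number of times per level of recursion, so that its own internal logarithm is not compounded with the prune-and-search logarithm. A safer fallback, if the fine-grained analysis proves too delicate, is to invoke \Cref{thm:ham sandwich} directly on the enumerated point sets $U$ and $D$ in $O(|A|)$ time; this already gives the $|A|/4$ guarantee and only loses the improvement from measuring complexity in the number of arrays rather than in the total number of points.
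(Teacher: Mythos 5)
Your proposal is correct and follows essentially the same route as the paper: the paper's proof is a two-line sketch saying to rerun the prune-and-search of \Cref{thm:point weighted prune} (itself Megiddo's ham-sandwich algorithm: horizontal median line, then a simultaneous bisector of the upper and lower halves) while replacing the point-level primitives with \Cref{lem:multiple arrays ranking} for side-counting and \Cref{lem:multiple arrays median} for bisection, exactly the two substitutions you describe. Your writeup is in fact more explicit than the paper's about where each subroutine enters and about the $\log^2$ bookkeeping.
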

\begin{proof}
The proof is almost same are that of \Cref{thm:point weighted prune}. However, we need to use \Cref{thm:point group prune} for pruning points, \Cref{lem:multiple arrays median} for bisecting $B$, and \Cref{lem:multiple arrays ranking} for counting points of $A$.
\end{proof}

\bibliographystyle{plain}
\bibliography{ref}

\end{document}